\documentclass[11pt]{article}
\usepackage{graphicx}
\usepackage{amsmath}
\usepackage{multirow}
\usepackage{amssymb}
\usepackage{amsthm}
\usepackage{natbib}
\usepackage{moreverb}
\usepackage[T1]{fontenc}
\usepackage{gensymb}
\usepackage{pdfpages}
\usepackage[figuresright]{rotating}
\usepackage{geometry}
\usepackage{bbm}
\usepackage{setspace}
\usepackage{algorithm, algpseudocode}
\usepackage{diagbox}
\usepackage{csquotes, caption}
\geometry{top=1.4 in, bottom = 1.4 in, left = 1.1 in, right  = 1.1 in}
\renewcommand{\P}{{\text{pr}}}
\newcommand{\E}{{{E}}}

\newcommand{\bx}{{\mathbf{x}}}
\newcommand{\bT}{{\mathbf{T}}}
\newcommand{\bt}{{\mathbf{t}}}

\newcommand{\bhtau}{{\boldsymbol{\hat{\tau}}}}

\newcommand{\var}{{\text{var}}}
\newcommand{\bY}{{\mathbf{Y}}}

\newcommand{\bV}{{\mathbf{V}}}
\newcommand{\1}{\mathbbm{1}}

\newcommand{\cF}{{\mathcal{F}}}

\newcommand{\cT}{{\mathcal{T}}}

\newcommand{\bq}{\mathbf{q}}

\newtheorem{theorem}{Theorem}

\newtheorem{corollary}{Corollary}
\newtheorem{lemma}{Lemma}

\theoremstyle{definition}
\newtheorem{condition}{Condition}

\theoremstyle{remark}
\newtheorem{case}{Case}
\captionsetup[algorithm]{labelsep=colon}
\newtheorem{remark}{Remark}

\title{Studentized sensitivity analysis for the sample average treatment effect in paired observational studies}
\author{Colin B. Fogarty \thanks{Operations Research and Statistics Group, MIT Sloan School of Management, Massachusetts Institute of Technology, Cambridge MA 02142 (e-mail: \texttt{cfogarty@mit.edu})}}
\date{}

\begin{document}

\maketitle

\begin{abstract}
A fundamental limitation of causal inference in observational studies is that perceived evidence for an effect might instead be explained by factors not accounted for in the primary analysis. Methods for assessing the sensitivity of a study's conclusions to unmeasured confounding have been established under the assumption that the treatment effect is constant across all individuals. In the potential presence of unmeasured confounding, it has been argued that certain patterns of effect heterogeneity may conspire with unobserved covariates to render the performed sensitivity analysis inadequate. We present a new method for conducting a sensitivity analysis for the sample average treatment effect in the presence of effect heterogeneity in paired observational studies. Our recommended procedure, called the studentized sensitivity analysis, represents an extension of recent work on studentized permutation tests to the case of observational studies, where randomizations are no longer drawn uniformly. The method naturally extends conventional tests for the sample average treatment effect in paired experiments to the case of unknown, but bounded, probabilities of assignment to treatment. In so doing, we illustrate that concerns about certain sensitivity analyses operating under the presumption of constant effects are largely unwarranted.
\end{abstract}
\singlespacing
\section{Introduction}
\subsection{Constant effects, then and now}
When inferring both the existence and the magnitude of causal effects, a common expedient is to assume that the effects are constant across individuals. Unease is sometimes expressed about the restrictiveness of this assumption. Indeed, the strength of the constant effects assumption (also called the assumption of additivity) was at the core of the ``Neyman-Fisher controversy,'' with Fisher favoring the sharp null that the treatment effect is zero for all individuals and Neyman recommending tests of the weaker null that the treatment effect is zero \textit{on average} for the individuals in a given experiment. \citet{ney35} suggested that inference assuming additivity in Latin Square designs could be anti-conservative in the presence of effect heterogeneity, which elicited an acerbic response by Fisher \citep{fis35comment} in which he called Neyman's understanding of the topic into question; see \citet{sab14} for a detailed discussion of the controversy and its ramifications.

The passage of time has done little to temper the debate, with both camps maintaining supporters. Those favoring Neyman's weak null focus on the seeming inadequacy of the  constant effects assumption as a description of reality.  Gelman writes that ``the presumption of constant effects corresponds to a simplified view of the world that can impede research discussion'' \citep[][p. 636]{gel15}. Advocates of Fisher's sharp null focus, among many things, on the central role of hypothesis testing in empirical falsification.  \citet{cox58} and \citet[][\S 2.4.5]{obs} discuss how rejection of the sharp null is in and of itself useful as a means of promoting future scientific inquiry, despite a rejection of the sharp null not  implying the existence of a treatment effect that is predictably positive or negative. Such a rejection may well be indicative of underlying subject-by-treatment interactions, hence identifying the existence of patterns for effects which the current experiment can neither describe nor predict. Quoting Rosenbaum, ``the variation we do not fathom today we intend to decipher tomorrow'' \citep[][p. 40]{obs}. See \citet[][\S 2]{cau17} for additional perspective  and for further quotations supporting both sides.
\subsection{Additivity in observational studies}
In the context of observational studies, the restrictiveness of the constant effects assumptions faces additional scrutiny when assessing the robustness of a study's findings to unmeasured confounding through a \textit{sensitivity analysis}. There is a perception that, with few exceptions, the methodology described in \citet[][\S 4]{obs} requires the researcher to posit a sharp null hypothesis, and that the model may not readily extend to tests of average causal effects in the face of effect heterogeneity.  Hill opines that when conducting a sensitivity analysis, ``the focus on additive treatment effects...is potentially problematic'' \citep[][p. 308]{hil02}, while Robins states that any gain from Rosenbaum's model for a sensitity analysis is offset ``...by Rosenbaum's assumption that individual outcomes are deterministic and that an additive treatment effect model holds'' \citep[][p. 310]{rob02}. There is concern, in particular, that sensitivity analyses conducted assuming constant effects may paint an overly optimistic picture of the study's sensitivity to hidden bias. The fear is that certain patterns of unmeasured confounding may conspire with the unidentified aspects of the constant effects model, rendering the analysis assuming constant effects inadequate. One particular argument is that in observational studies, individuals may self-select into the treatment group which they know to be most beneficial to them. For a treated and control individual with the same observed covariate value $x$, one may then expect a difference in the observed response between the treated and control individuals due to this ``essential heterogeneity,'' even if there truly was no average effect of the treatment at that point $x$. \citep{hec06}. A constant treatment effect model precludes varying treatment effects of this nature, leading some to call into question the utility of models assuming additivity.

\subsection{Studentization with hidden bias}
Rather than taking a stance on which null hypothesis should be preferred, this work focuses primarily on the ramifications of the debate for the interpretations ascribed to sensitivity analyses. We have in mind a practitioner who recognizes the need for conducting a sensitivity analysis when treatment assignment is beyond their control but who would ideally like the analysis to attest to the robustness of their findings in the presence of heterogeneous effects, thereby assuaging the potential fears of critics in their field. We specialize our exposition to the case of paired observational studies, and to inference conducted using the treated-minus-control difference-in-means as the test statistic. The procedure for conducting a sensitivity analysis using this test statistic and assuming constant effects within the model of \citet{ros87} is reviewed in \S\ref{sec:review}. 

After reviewing the conventional approach assuming constant effects, we assess in \S \ref{sec:largesample} whether one can construct a valid sensitivity analysis for the sample average treatment effect in the presence of effect heterogeneity. In answering this question affirmatively, we propose a natural extension of the conventional large-sample normal-based test for the sample average treatment effect in paired experiments to paired observational studies. Theorem \ref{prop:neyman} demonstrates the validity of the approach under Neyman's null in large samples; however, through its reliance on the normal for a reference distribution the test loses the nonparametric appeal of the conventional approaches for sensitivity analysis assuming constant effects. In \S \ref{sec:studentwhole}, we overcome these issues by proposing a new reference distribution based upon biased randomizations of the data. Theorems \ref{prop:student} and \ref{thm:exact} show that this reference distribution continues to provide asymptotic Type I error control under Neyman's null while, in addition, furnishing an \textit{exact} finite-sample sensitivity analysis if Fisher's sharp null is true. 

With assurance that the framework for sensitivity analysis of \citet{ros87} is compatible with heterogeneous effects, \S \ref{sec:implication} assesses whether a more traditional sensitivity analysis based on the difference-in-means is valid in the presence of heterogeneous effects.  Theorem \ref{prop:fisher} of \S \ref{sec:fail} answers this negatively, in that there exists patterns of effect heterogeneity which lead the test assuming constant effects to have an inflated Type I error rate even asymptotically. While somewhat concerning, the mere potential for anti-conservativeness does not directly answer whether sensitivity analyses valid under constant effects could yield grossly misleading perceptions of robustness to hidden bias in the presence of heterogeneous effects. Theorem \ref{prop:epsilon} of \S \ref{sec:epsilon} shows that the extent to which the traditional approach based on the difference-in-means can differ from the asymptotically valid method proposed in \S \ref{sec:studentwhole} is, loosely stated ``not by much,'' a position to which the data analysis in \S \ref{sec:example} further attests. At its core, the claim stems from the realization that even under effect heterogeneity, the sensitivity analysis based on the difference-in-means creates a candidate worst-case distribution that correctly bounds the \textit{expectation} of the test statistic's actual distribution (proven in Lemma \ref{lemma:expec}) but which may have too small a variance (shown through a numerical counterexample in \S \ref{sec:fail}). As bias trumps variance in a sensitivity analysis, the extent to which the analysis based on the conventional approach can mislead is thankfully limited.

This work reveals that while a randomization-based sensitivity analysis using the difference-in-means as a test statistic  may have the improper size in the presence of heterogeneous effects, the issue is avoided through an appropriate studentization of the test-statistic while employing the same worst-case distribution for treatment assignments. This aligns with work on robust permutation tests by \citet{jan97} and \citet{chu13} while extending the ideas to the context of potentially biased randomizations. Under no unmeasured confounding, the studentization employed is none other than that recommended by Gosset himself: the observed difference-in-means is simply divided by the conventional standard error estimator for paired studies, hence yielding the usual $t$-statistic. When hidden bias is allowed to corrupt inference, finding both the appropriate initial test statistic and the appropriate standard deviation by which to studentize is itself non-trivial, and is described in \S \ref{sec:neyman}. The method proposed in \S \ref{sec:studentwhole}, called the \textit{studentized sensitivity analysis}, combines this studentized test statistic with the worst-case distribution for treatment assignments developed in \citet{ros87} under the assumption of constant effects. The procedure operates within the familiar model for biased treatment assignments of \citet[][\S 4]{ros87, obs} and is straightforward to implement. The studentized procedure thus empowers researchers with a sensitivity analysis for the sample average treatment effect valid in the face of effect heterogeneity, while researchers having conducted a sensitivity analysis using the traditional approach can be reasonably assured that their results do not materially overstate insensitivity to hidden bias. 

\section{Sensitivity analysis for constant effects}\label{sec:review}
\subsection{Notation and review}

With few exceptions, we adopt the notation for paired studies introduced in \S 4.1 of \citet{din17} while making suitable extensions to accommodate paired observational studies. There are $n$ independent matched pairs. In the $i$th matched pair there is one individual who receives the treatment, $T_{ij} = 1$, and one who receives the control, $T_{ij'} = 0$, such that $T_{i1} + T_{i2} = 1$ for each pair. These pairs are formed on the basis of observed pre-treatment covariates $\bx_{ij}$; however, individuals may differ on the basis of an unobserved covariate $0\leq u_{ij}\leq 1$, such that $u_{i1} \neq u_{i2}$. Each individual has a potential outcome under treatment, $Y_{ij}(1)$, and under control, $Y_{ij}(0)$. Implicit in this description of the potential outcomes is the stable unit-treatment value assumption, or SUTVA \citep{rub80}. Let $\cF = \{Y_{ij}(1), Y_{ij}(0), \bx_{ij}, u_{ij}: i = 1,...,n; j = 1,2\}$ be a set containing the potential outcomes and covariates, both observed and unobserved, for the individuals in the observational study at hand. Inference moving forwards will condition upon $\cF$, such that a superpopulation model is neither assumed nor required.

The fundamental problem of causal inference is that the pair of potential outcomes $\{Y_{ij}(1), Y_{ij}(0)\}$ is not jointly observable, and hence we cannot observe the individual treatment effect $\tau_{ij} = Y_{ij}(1) - Y_{ij}(0)$ for any individual \citep{hol86}. Instead, we observe the response $Y_{ij}^{obs} = T_{ij}Y_{ij}(1) +(1-T_{ij})Y_{ij}(0)$. See \citet{ney23} and \citet{rub74} for more on the potential outcomes framework.  Let $\bar{\tau}_i = (\tau_{i1}+\tau_{i2})/2$ be the average of the two treatment effects in pair $i$. The $i$th treated-minus-control paired difference, $\hat{\tau}_i$, is
\begin{align}
\hat{\tau}_i
&= (T_{i1}-T_{i2})(Y_{i1}^{obs} - Y_{i2}^{obs}).\label{eq:pairdiff}
\end{align}Boldface will be used to represent vector quantitites; for example, $\bT = [T_{11}, T_{12}, ..., T_{n2}]$ is the vector of length $2n$ containing the treatment assignments for all individuals. Quantities dependent on the assignment vector such as $\bT$ and $\bY^{obs}$ are random, whereas $\cF$ contains quantities viewed as fixed in the forthcoming developments.

\subsection{A Model for Hidden Bias in Observational Studies}\label{sec:model}
Let $\Omega$ be the set of $2^{n}$ possible values of ${\bT}$ under the matched pairs design, i.e. $\Omega = \{\bt: t_{i1}+t_{i2} = 1, i=1,...,n\}$. In a paired randomized experiment,  each $\bt \in \Omega$ has probability $2^{-n}$ of being selected. Let $\cT$ denote the event $\bT \in \Omega$. For a paired randomized experiment $\pi_{i} = \P(T_{i1} = 1 \mid \cF, \cT ) = 1/2$, an immediate consequence of which is that $E(\hat{\tau}_i\mid \cF, \cT) = \bar{\tau}_i$ for all pairs. That is, in a paired experiment, the treated-minus-control paired difference in any pair is an unbiased estimator for the average of the two treatment effects in that pair. 

Without control over the assignment mechanism, the probabilities $\pi_{i}$ $(i= 1,...,n)$ are unknown to the researcher in an observational study. A concern in the analysis of observational studies is that $\pi_{i}\neq 1/2$ due to latent discrepancies between the unobserved covariates $u_{i1}$ and $u_{i2}$, which could in turn induce bias into $\hat{\tau}_i$ as an estimator for $\bar{\tau}_i$.  Through a sensitivity analysis, one assesses the robustness of a study's finding to deviations from a paired experiment caused by unmeasured confounding. A sensitivity analysis places bounds on the allowable departure from a pair-randomized experiment. We use the model of \citet{ros87} and \citet[][\S 4]{obs}, which controls the allowable departure from a paired randomized experiment through a parameter $\Gamma  = \exp(\gamma) \geq 1$. In each pair, the model relates $u_{i1}$ and $u_{i2}$ to $\pi_i$ by $\pi_i = {\exp(\gamma u_{i1})}/\{\exp(\gamma u_{i1}) + \exp(\gamma u_{i2})\}$, which implies that $1/(1+\Gamma) \leq \pi_i\leq \Gamma/(1+\Gamma)$. The resulting model for biased treatment assignment in a paired observational study with sensitivity parameter $\Gamma$ is 
\begin{align}\label{eq:sensmodel}
\P(\bT = \bt\mid \cF, \cT) &= \prod_{i=1}^n\pi_i^{t_{i1}}(1-\pi_i)^{1-t_{i1}}, \;\; \frac{1}{1+\Gamma} \leq \pi_i \leq \frac{\Gamma}{1+\Gamma} \;\; (i=1,...,n).
\end{align}
$\Gamma=1$ recovers a paired randomized experiment, while $\Gamma> 1$ encodes a family of departures from unbiased assignments within each pair. Larger values of $\Gamma$ allow hidden bias to have a larger impact on the conditional assignment probabilities.

For ease of presentation, in what follows all expectations, variances, and probabilities will be implicitly computed conditional upon $\cF$ and $\cT$. For example, $\P(\bT = \bt \mid \cF, \cT)$ will be written as $\P(\bT=\bt)$ henceforth.

\subsection{Sensitivity analysis with constant effects}\label{sec:add}
Suppose interest lies in testing Fisher's sharp null hypothesis,
\begin{align*}H_{F}: \tau_{ij} = Y_{ij}(1) - Y_{ij}(0) = 0 \;\;\; \text{for all } i,j,\end{align*}
implying that the treatment has no effect for any of the $2n$ individuals in the study. Fisher's sharp null imputes the missing values of the potential outcomes, as $Y_{ij}^{obs} = Y_{ij}(1) = Y_{ij}(0)$ for all $i,j$ under $H_F$. Further, we see from (\ref{eq:pairdiff}) that $\hat{\tau}_i= (T_{i1}-T_{i2})(Y_{i1}^{obs} - Y_{i2}^{obs})= (T_{i1}-T_{i2})(Y_{i1}(0) - Y_{i2}(0))$ under $H_F$. As a result, $|\hat{\tau}_i|$ is fixed at $|Y_{i1}(0) - Y_{i2}(0)|$ across randomizations, with only the sign of $\hat{\tau}_i$ flipping according to the difference in treatment assignments $(T_{i1}-T_{i2})$.

Consider the average of the treated-minus-control paired differences, $\hat{\tau}= n^{-1}\sum_{i=1}^n\hat{\tau}_i$, commonly referred to in the context of randomization tests as the permutational $t$-statistic \citep{wel37, ros07}. As the missing potential outcomes are imputed under $H_F$, the observed value of $\hat{\tau}$ is computable for any feasible treatment assignment in the paired design. For any scalar $a$, the randomization distribution of $\hat{\tau}$ under $H_F$ is
\begin{align}\label{eq:null}
\P\left(\hat{\tau} \leq a \mid H_F\right) &= \sum_{\bt\in \Omega}\1\left\{n^{-1}\sum_{i=1}^n(t_{i1}-t_{i2})(Y_{i1}(0)-Y_{i2}(0))\leq a\right\}\P(\bT=\bt)\nonumber\\
&= \sum_{\bt\in \Omega}\1\left\{n^{-1}\sum_{i=1}^n(t_{i1}-t_{i2})(Y_{i1}(0)-Y_{i2}(0))\leq a\right\}\prod_{i=1}^n\pi_i^{t_{i1}}(1-\pi_{i})^{1-t_{i1}},\end{align} where $\1\{A\}$ is an indicator that the event $A$ occurred. In a paired experiment,  (\ref{eq:null}) reduces to the proportion of randomizations resulting in $\hat{\tau} \leq a$; however, in observational studies the indicators are weighted unequally according to $\P(\bT=\bt)$, which need not be uniform over $\Omega$ in the presence of hidden bias.

While the resulting value of $\hat{\tau}$ is known for each $\bt\in \Omega$ under $H_F$, the probability (\ref{eq:null}) remains unknown in a paired observational study as it depends on the unknown assignment probabilities. As a result, (\ref{eq:null}) cannot be directly employed as a reference distribution for testing Fisher's sharp null. A sensitivity analysis proceeds by, for a given value of $\Gamma$ in (\ref{eq:sensmodel}), finding the values for $\pi_i$ yielding the worst-case randomization distribution and worst-case $p$-value for the desired inference. One then increases the value of $\Gamma$ until the null hypothesis can no longer be rejected. This changepoint $\Gamma$ serves as a measure of the robustness of the study's findings to unmeasured confounding.

As an illustration, suppose we were to find the worst-case $p$-value corresponding for testing Fisher's sharp null subject to (\ref{eq:sensmodel}) holding at a particular value of $\Gamma$ with a greater-than alternative. Define $n$ random variables $V_{i, \Gamma}|\hat{\tau}_i|$, where $V_{i, \Gamma}$ are conditionally independent given $\cF$ and $\cT$ and take on the values $\pm$ 1 with
\begin{align*} \P(V_{i,\Gamma}=+1 ) &= \Gamma/(1+\Gamma),\\
\P(V_{i,\Gamma}=-1 ) &= 1/(1+\Gamma).\end{align*} For each $i$, observe that $V_{i,\Gamma}$  is constructed such that the largest possible probability is placed on +1 under the sensitivity model in (\ref{eq:sensmodel}), and hence such that $V_{i,\Gamma}|\hat{\tau}_i|$ is positive with the maximal probability. \citet{ros87,ros07} shows that under Fisher's sharp null, $\hat{\tau}_i$ is stochastically dominated by $V_{i, \Gamma}|\hat{\tau}_i|$, which has expectation $\{(\Gamma-1)/(1+\Gamma)\}|\hat{\tau}_i|$ under $H_F$.  That is, for any scalar $a$, $\P(\hat{\tau}_i\geq a\mid H_F) \leq \P(V_{i,\Gamma}|\hat{\tau}_i| \geq a \mid H_F)$ if (\ref{eq:sensmodel}) holds at $\Gamma$. 

Define $B_{i,\Gamma}$ to be centered versions of $V_{i,\Gamma}|\hat{\tau}_i|$ under Fisher's sharp null, 
\begin{align}\label{eq:bbar} 
B_{i,\Gamma} &= V_{i,\Gamma}|\hat{\tau}_i| - \left(\frac{\Gamma-1}{1+\Gamma}\right)|\hat{\tau}_i|,\end{align} such that $E(B_{i,\Gamma}) = 0$. Further define the random variable $D_{i,\Gamma}$ as the observed treated-minus-control paired difference minus the expectation of its bounding random variable,
\begin{align*}
 D_{i, \Gamma}&= \hat{\tau}_i - \left(\frac{\Gamma-1}{1+\Gamma}\right)|\hat{\tau}_i|.\end{align*} In comparing $B_{i,\Gamma}$ to $D_{i,\Gamma}$, note that while $B_{i,\Gamma}$ has expectation zero, $E(D_{i,\Gamma}\mid H_F) \leq 0$ if the sensitivity model (\ref{eq:sensmodel}) holds at $\Gamma$ as $\{(\Gamma-1)/(1+\Gamma)\}|\hat{\tau}_i|$ is the worst-case (largest) expectation for $\hat{\tau}_i$ under $H_F$.  As stochastic dominance is preserved under independent convolutions, the random variable $\bar{B}_\Gamma = n^{-1}\sum_{i=1}^nB_{i,\Gamma}$ stochastically bounds $\bar{D}_\Gamma = n^{-1}\sum_{i=1}^n D_{i,\Gamma}$ under Fisher's sharp null. Only when essential, we will write $\bar{B}_\Gamma = \bar{B}_\Gamma(\bV_\Gamma, \bhtau)$ to reflect its dependence on $V_{i,\Gamma}$ and $\hat{\tau}_i$.

Let $\hat{F}_\Gamma(a) = \P\left(\bar{B}_\Gamma \leq a \mid H_F\right)$ be the cumulative distribution function for the bounding random variable $\bar{B}_\Gamma$. To provide further insight into this bounding distribution, it is worthwile to consider how one would generate realizations from $\hat{F}_\Gamma(\cdot)$ by means of Monte Carlo simulation when conducting a sensitivity analysis with level of unmeasured confounding $\Gamma$ and a greater than alternative. Let $\bar{D}_\Gamma^{obs}$ be the observed value of $\bar{D}_\Gamma$, and consider the following simulation scheme.
\begin{algorithm}
\caption{Permutational $t$ sensitivity analysis for Fisher's sharp null at $\Gamma$}\label{alg:fisher}
\begin{enumerate} \item In the $m$th of $M$ iterations:
\begin{enumerate}
\item Generate $V_{i,\Gamma}\overset{iid}{\sim} 2\times Bernoulli\left(\frac{\Gamma}{1+\Gamma}\right)-1$ for each $i$.
\item Compute $B_{i,\Gamma} = V_{i,\Gamma}|\hat{\tau}_i| - \left(\frac{\Gamma-1}{1+\Gamma}\right)|\hat{\tau}_i|$ for each $i$.
\item Compute $\bar{B}^{(m)}_{\Gamma} = n^{-1}\sum_{i=1}^nB_{i,\Gamma}$; store this value across iterations.
\end{enumerate}
\item Approximate the bound on the greater-than $p$-value by\begin{align*} \hat{p}_{val} &= \frac{1 + \sum_{m=1}^M\1\{\bar{B}^{(m)}_{\Gamma} \geq \bar{D}_\Gamma^{obs}\}}{1+M}\end{align*}
\end{enumerate}
\end{algorithm}

To state the justification for this procedure more explicitly, if (\ref{eq:sensmodel}) holds at $\Gamma$, then for any scalar $a$ and any sample size $n$
\begin{align}
&\P\left(\bar{D}_\Gamma \geq a\mid H_F\right)\nonumber\\
&\leq \max_{p}\sum_{\bt\in \Omega}\1\left[n^{-1}\sum_{i=1}^n\left\{(t_{i1}-t_{i2})|\hat{\tau}_i| - \left(\frac{\Gamma-1}{1+\Gamma}\right)|\hat{\tau}_i|\right\} \geq a\right]\prod_{i=1}^np_i^{t_{i1}}(1-p_i)^{1-t_{i1}}\nonumber \\
&=\sum_{\bt\in \Omega}\1\left[n^{-1}\sum_{i=1}^n\left\{(t_{i1}-t_{i2})|\hat{\tau}_i| - \left(\frac{\Gamma-1}{1+\Gamma}\right)|\hat{\tau}_i|\right\} \geq a\right]\prod_{i=1}^n\left(\frac{\Gamma}{1+\Gamma}\right)^{t_{i1}}\left(\frac{1}{1+\Gamma}\right)^{1-t_{i1}}\nonumber \\
\label{eq:permt}& = \P\left(\bar{B}_\Gamma \geq a \mid H_F\right),
\end{align} such that $\bar{B}_\Gamma$ maximizes the right-tail probability subject to  (\ref{eq:sensmodel}) holding at $\Gamma$. Henceforth we will refer to the sensitivity analysis based on the randomization distribution of $\bar{B}_\Gamma$ as the permutational $t$ sensitivity analysis.

This procedure readily extends to a test of the null hypothesis that treatment effect is constant at some common value $\tau_0\neq 0$ for all individuals simply by replacing $\hat{\tau}_i$ with $\hat{\tau}_i - \tau_0$ in the preceding derivations, and extends to less-than alternatives by replacing $(\hat{\tau}_i - \tau_0)$ with $-(\hat{\tau}_i - \tau_0)$. That said, the procedure presented in this section does rely upon the assumption that the treatment effects are known for all individuals under the null hypothesis. For instance, if we merely assumed that the average of the treatment effects equaled zero while allowing for heterogeneous individual level effects, $\hat{\tau}_i \neq (T_{i1}-T_{i2})\left(Y_{i1}(0)-Y_{i2}(0)\right)$ in general. Ultimately, we will evaluate whether or not (\ref{eq:permt}) also bounds the maximal tail probability when testing whether or not the average of the $2n$ possibly heterogeneous treatment effects equals zero in the presence of hidden bias. Before doing so, we introduce a new method distinct from the permutational $t$ sensitivity analysis which does provide a valid sensitivity analysis while accommodating effect variation.

\section{Large-sample sensitivity analysis for the sample average treatment effect under effect heterogeneity}\label{sec:largesample}
\subsection{Neyman's notion of no effect}

The sample average treatment effect in a paired experiment or observational study, $\bar{\tau}$, is defined as the average of the treatment effects for the $2n$ individuals in our study, 
\begin{align*} \bar{\tau} = n^{-1}\sum_{i=1}^n\bar{\tau}_i = (2n)^{-1}\sum_{i=1}^n\sum_{j=1}^2\tau_{ij}.\end{align*} Forthcoming developments will focus on developing a valid level-$\alpha$ sensitivity analysis for the null hypothesis
\begin{align*} H_N: \bar{\tau} = 0,
\end{align*}
sometimes referred to as Neyman's weak null, when (\ref{eq:sensmodel}) is assumed to hold at a particular value for $\Gamma$.  If we further assume that the treatment effect is constant for all individuals, the null hypotheses $H_N$ and $H_F$ are equivalent and the sensitivity analysis described in \S \ref{sec:add} would be justified. In general however, the null hypothesis $H_N$ is {composite}, and there are infinitely many values for $\{\tau_{ij}$: $i=1,...,n$; $j=1,2\}$ satisfying $H_N$. The pattern of treatment effects specified by $H_F$ is simply one element of this composite null, which may or may not yield the worst-case $p$-value over all patterns of treatment effects allowed under $H_N$. Unlike tests for Fisher's sharp null, tests of $H_N$ in randomized experiments have historically been conducted using large-sample approximations rather than randomization tests. For instance, in paired experiments the conventional test of $H_N$ simply uses the parametric $t$-test based on the treated-minus-control difference in means \citep{ima08}. We now present a large-sample test for $H_N$ valid in paired observational studies with hidden bias.

\subsection{A Neyman-style sensitivity analysis}\label{sec:neyman}

Recall the definition of $D_{i,\Gamma}$ as
\begin{align*}
 D_{i, \Gamma}&= \hat{\tau}_i - \left(\frac{\Gamma-1}{1+\Gamma}\right)|\hat{\tau}_i|.
\end{align*} At $\Gamma=1$, note that $\bar{D}_1  = \hat{\tau}$, the average treated-minus-control paired difference. Further observe that while $|\hat{\tau}_i|$ is fixed at $|Y_{i1}(0)-Y_{i2}(0)|$ under $H_F$, $|\hat{\tau}_i|$ generally varies across randomizations in $\Omega$ for other elements of $H_N$, taking on values $|Y_{i1}(1)-Y_{i2}(0)|$ and $|Y_{i2}(1)-Y_{i1}(0)|$ with probability $\pi_i$ and $1-\pi_i$ respectively. Capturing the impact of $|\hat{\tau}_i|$ on the overall variation of $\bar{D}_\Gamma$ when $\Gamma>1$ is essential in what follows. Towards that end, let $se(\cdot)$ denote the conventional standard error estimator for the sample mean based upon $n$ observations, for example
\begin{align*}
\text{se}(\bar{D}_\Gamma)^2 &= \frac{1}{n(n-1)}\sum_{i=1}^n(D_{i, \Gamma} - \bar{D}_{\Gamma})^2.
\end{align*} At $\Gamma=1$, $\text{se}(\bar{D}_1)$ is the usual standard error in a paired experiment, as $\text{se}(\bar{D}_1)^2 = se(\hat{\tau})^2 = \{n(n-1)\}^{-1}\sum_{i=1}^n(\hat{\tau}_i - \hat{\tau})^2$. 

Fix $\alpha$ with $0<\alpha \leq 0.5$.  Consider a candidate level-$\alpha$ test that the sample average treatment effect equals $0$ with a greater-than alternative, and with allowable degree of bias controlled by $\Gamma$ in  (\ref{eq:sensmodel}) of the form\begin{align*}
\varphi_N(\alpha, \Gamma) &= \1\{\bar{D}_{\Gamma}/\text{se}(\bar{D}_\Gamma) \geq \Phi^{-1}(1-\alpha)\},
\end{align*} where $\Phi(\cdot)$ is the cumulative distribution function of the standard normal. $\varphi_N(\alpha,\Gamma)$ is simply the event that the candidate sensitivity analysis returns a rejection of the null hypothesis. At $\Gamma=1$ $\varphi_N(\alpha,1)$ is the conventional large-sample test for Neyman's weak null in a paired experiment, where one rejects the null hypothesis if $\hat{\tau}/se(\hat{\tau})$ exceeds the appropriate quantile from the standard normal. As the following theorem demonstrates, constructing a test through the random variables $\bar{D}_\Gamma$ allows for a natural extension of conventional tests for $H_N$ in paired experiments to a sensitivity analysis where $\Gamma>1$.


\begin{theorem}\label{prop:neyman}
Suppose that treatment assignment satisfies (\ref{eq:sensmodel}) for a particular $\Gamma \geq 1$. Under mild regularity conditions, 
\begin{align*}
\underset{n\rightarrow\infty}{\lim}E(\varphi_N(\alpha, \Gamma) \mid H_N) \leq \alpha,
\end{align*} such that the test $\varphi_N(\alpha,\Gamma)$ provides an asymptotically valid sensitivity analysis for testing Neyman's weak null hypothesis.
\end{theorem}

The regularity conditions, presented in the supplementary material, serve to preclude certain pathological sequences of potential outcomes to ensure, for instance, that a central limit theorem holds for $\bar{D}_{\Gamma}$. Under these conditions, Theorem \ref{prop:neyman} implies that for sufficiently large $n$, if we reject the null hypothesis $H_N$ when $\bar{D}_{\Gamma}/\text{se}(\bar{D}_\Gamma) \geq \Phi^{-1}(1-\alpha)$, then asymptotically we will incorrectly reject a true null hypothesis with probability at most $\alpha$ when (\ref{eq:sensmodel}) holds at $\Gamma$. That is, $\varphi_N(\alpha, \Gamma)$ provides an asymptotically valid level-$\alpha$ sensitivity analysis for Neyman's weak null.

The proof is divided into several lemmas, each of which illustrates an important component of the procedure. Those most essential to the result are presented in the appendix, while those stemming from standard derivations are deferred to the web-based supplementary material. Lemma \ref{lemma:ub} constructs a new random variable $\bar{U}_{\Gamma}$ that stochastically bounds $\bar{D}_{\Gamma}$ for any sample size $n$; however, its randomization distribution is not directly useful as it depends on the unknown values of the missing potential outcomes, which are not imputed by $H_N$. Lemma \ref{lemma:expec} shows that the conditional expectation of $\bar{U}_{\Gamma}$, and hence of $\bar{D}_{\Gamma}$, is bounded above by 0 when the sample average treatment effect equals 0 and (\ref{eq:sensmodel}) holds at $\Gamma$. Lemma \ref{lemma:var} illustrates that $\text{se}(\bar{D}_\Gamma)^2$ provides an estimator of $\var(\bar{D}_{\Gamma} )$ which is conservative in expectation regardless of the values for the unknown probabilities $\pi_i$. Lemma \ref{lemma:vub} shows that the expectation of $\text{se}(\bar{D}_\Gamma)^2$ is also larger than $\var(\bar{U}_{\Gamma} )$ when (\ref{eq:sensmodel}) holds at $\Gamma$. Together, these results bound, in expectation, the moments of the unknown stochastically dominating random variable $\bar{U}_{\Gamma}$ by quantities computable from the observational study at hand, and hold without any regularity conditions.  In the supplementary material, we illustrate that under suitable regularity conditions,  this sharp bounding random variable has a distribution which is asymptotically normal. We then demonstrate that despite the true moments for the bounding random variable being unknown, the true expectation of $\bar{U}_{\Gamma}$ can be safely replaced by zero, and the true variance of $\var(\bar{U}_{\Gamma} )$ similarly replaced by $se({\bar{D}_{\Gamma}})^2$ without corrupting the asymptotic size of the procedure.

\section{A studentized sensitivity analysis}\label{sec:studentwhole}
\subsection{An alternative reference distribution using biased randomizations}\label{sec:student}
While Theorem \ref{prop:neyman} provides a large-sample sensitivity analysis for Neyman's weak null, much of the elegance of the randomization-based sensitivity analysis for Fisher's sharp null has been lost along the way. Rather than constructing a biased randomization distribution to perform inference, inference by means of $\varphi_N(\alpha,\Gamma)$ simply compares the test statistic $\bar{D}_\Gamma/\text{se}(\bar{D}_\Gamma)$ to a critical value from a standard normal. Importantly, as simulations in \S \ref{sec:sim} illustrate, the performance of $\varphi_N$ can be poor in moderately sized samples, such that deriving a procedure with improved finite-sample performance is undoubtedly warranted. We now demonstrate that, with appropriate studentization, the worst-case randomization distribution developed when constructing the bounding random variable $\bar{B}_\Gamma$ defined in (\ref{eq:bbar}) can be employed towards this end.

 For any realization of $\bT$, define the biased randomization distribution $\hat{G}_{\Gamma}$ by
\begin{align}
\label{eq:Ghat}\hat{G}_{\Gamma}(x) &= \P\left(\frac{\bar{B}_\Gamma}{\text{se}(\bar{B}_\Gamma)}\leq x \mid \bT\right)\\&= \sum_{\bt \in \Omega}\1\left\{\frac{\bar{B}_{\Gamma}(\bt_1-\bt_2,\bhtau)}{\text{se}(\bar{B}_\Gamma(\bt_1-\bt_2,\bhtau))}\leq x\right\}\prod_{i=1}^n\left(\frac{\Gamma}{1+\Gamma}\right)^{t_{i1}}\left(\frac{1}{1+\Gamma}\right)^{1-t_{i1}}\nonumber,
\end{align}
where $\bt_j = (t_{1j}, t_{2j},...,t_{nj})$ contains the treatment assignments for the $j$th unit in each pair. Note that $\hat{G}_{\Gamma}(x)$ is itself a random variable: for each point $x$, it varies with the observed treatment $\bT$ by means of its dependence on the magnitude of the treated-minus-control paired differences $|\hat{\tau}_i|$ which vary across randomizations under $H_N$. Observe that $\hat{G}_{\Gamma}(x)$ utilizes the same biased distribution for treatment assignments as does (\ref{eq:permt}); however, it importantly computes the randomization distribution of the studentized statistic $\bar{B}_{\Gamma}/\text{se}(\bar{B}_\Gamma)$ instead of the unstudentized statistic $\bar{B}_\Gamma$.

To bring the procedure to life it is useful to consider how one can generate draws from $\hat{G}_\Gamma(\cdot)$ by means of Monte Carlo simulation when calculating a worst-case $p$-value with a greater than alternative at level of unmeasured confounding $\Gamma$. Let $\bar{D}^{obs}_\Gamma/\text{se}(\bar{D}^{obs}_\Gamma)$ be the observed value for $\bar{D}_\Gamma/\text{se}(\bar{D}_\Gamma)$ in the observational study at hand, and consider the following simulation scheme.
\begin{algorithm}
\caption{Studentized sensitivity analysis for Neyman's weak null at $\Gamma$}\label{alg:neyman}
\begin{enumerate} \item In the $m$th of $M$ iterations:
\begin{enumerate}
\item Generate $V_{i,\Gamma}\overset{iid}{\sim} 2\times Bernoulli\left(\frac{\Gamma}{1+\Gamma}\right)-1$ for each $i$.
\item Compute $B_{i,\Gamma} = V_{i,\Gamma}|\hat{\tau}_i| - \left(\frac{\Gamma-1}{1+\Gamma}\right)|\hat{\tau}_i|$ for each $i$.
\item Compute $\bar{B}_\Gamma = n^{-1}\sum_{i=1}^nB_{i,\Gamma}$.
\item Compute $\text{se}(\bar{B}_\Gamma)^2 = \{n(n-1)\}^{-1}\sum_{i=1}^n(B_{i,\Gamma} - \bar{B}_\Gamma)^2$.
\item Compute $S^{(m)}_\Gamma = \bar{B}_\Gamma/\text{se}(\bar{B}_\Gamma)$; store this value across iterations. 
\end{enumerate}
\item Approximate the bound on the greater-than $p$-value by \begin{align*} \hat{p}_{val} &= \frac{1 + \sum_{m=1}^M\1\{S^{(m)}_\Gamma\geq \bar{D}^{obs}_\Gamma/\text{se}(\bar{D}^{obs}_\Gamma)\}}{1+M}\end{align*}
\end{enumerate}
\end{algorithm}

Algorithms 1 and 2 use the same biased probabilities of assignment to treatment to generate $V_{i,\Gamma}$ in each iteration, and differ only in the studentization of $\bar{B}_\Gamma$ when forming the test statistic. Define a candidate level-$\alpha$ sensitivity analysis at $\Gamma$ for $H_N$ with a greater than alternative based on this studentized randomization distribution,
\begin{align*}
\varphi_S(\alpha, \Gamma) &= \1\left\{\bar{D}_{\Gamma}/\text{se}(\bar{D}_\Gamma)  \geq \hat{G}^{-1}_{ \Gamma}(1-\alpha)\right\},
\end{align*}  where $\hat{G}_\Gamma^{-1}(1-\alpha)= \inf\{x: \hat{G}_\Gamma(x)\geq 1-\alpha\}$ is the $1-\alpha$ quantile of the distribution $\hat{G}_\Gamma$. Comparing the new procedure to $\varphi_N$, $\varphi_S$ simply replaces a critical value from a normal approximation with one from the randomization distribution for $\bar{B}_\Gamma/\text{se}(\bar{B}_\Gamma)$. The following theorem justifies the use of $\hat{G}_\Gamma(\cdot)$ as a bounding distribution for the random variable $\bar{D}_\Gamma/\text{se}(\bar{D}_\Gamma)$: if the sensitivity model holds at $\Gamma$, then asymptotically conducting inference using $\hat{G}_\Gamma$ as a reference distribution controls the Type I error rate at $\alpha$ even with heterogeneous treatment effects.

\begin{theorem}\label{prop:student}
Under mild regularity conditions, for all $x$ and conditional upon $\cF$ and $\cT$,
\begin{align*}
\hat{G}_{{\Gamma}}(x) &\overset{p}{\rightarrow} \Phi(x),
\end{align*}
\end{theorem}

\begin{corollary}\label{cor:randdist}
Under mild regularity conditions, if treatment assignment satisfies (\ref{eq:sensmodel}) for a particular $\Gamma \geq 1$,
\begin{align*}\underset{n\rightarrow\infty}{\lim}E(\varphi_S(\alpha, \Gamma) \mid H_N) \leq \alpha\end{align*} when treatment assignment satisfies (\ref{eq:sensmodel}) at $\Gamma$ and Neyman's weak null holds.  \end{corollary}

The proof of Theorem \ref{prop:student} is presented in the supplementary web material, while Corollary \ref{cor:randdist} is an immediate consequence of Theorems \ref{prop:neyman}, \ref{prop:student} and Lemma 11.2.1 of \citet{leh05}.  To test the null that $\bar{\tau} = \bar{\tau}_0$ for general $\bar{\tau}_0$, one need simply replace $\hat{\tau}_i$ with $\hat{\tau}_i - \bar{\tau}_0$ in the definition of $D_{i,\Gamma}$, while a less-than alternative can be accommodated by replacing $\hat{\tau}_i - \bar{\tau}_0$ with $-(\hat{\tau}_i - \bar{\tau}_0)$. These results justify the use of the studentized randomization distribution $\hat{G}_{ \Gamma}$ as a null distribution for the test statistic $\bar{D}_{\Gamma}/\text{se}(\bar{D}_\Gamma)$ to conduct an asymptotically valid sensitivity analysis at level of unmeasured confounding $\Gamma$ even in the presence of effect heterogeneity. This facilitates evaluating the composite null $H_N$ with a single test which, by being based on a biased randomization distribution, maintains the nonparametric spirit of the many procedures for testing sharp nulls described in \citet{obs}.

\subsection{An exact and asymptotically robust sensitivity analysis}\label{sec:exact}
In the classical two-sample setup with samples of size $n$ and $m$ drawn $iid$ and independently from distributions $P$ and $Q$ respectively, it is well known that permutation tests, valid for the null that $P=Q$, need not provide Type I error control even asymptotically for the null $\theta(P) = \theta(Q)$ for real parameters $\theta$ such as the population mean. To address this, \citet{chu13} recommend studentization as a general mechanism for employing permutation-based reference distributions to furnish asymptotically valid inference for $\theta(P)=\theta(Q)$ while attractively maintaining finite-sample exactness if in reality $P=Q$. 

Parallels are readily drawn between tests of equality of distribution and tests of Fisher's sharp null, and likewise between tests of equality of expectations and tests of Neyman's weak null. Under no unmeasured confounding the procedure $\varphi_S(\alpha,1)$ is exact for Fisher's sharp null, while its asymptotic correctness under Neyman's weak null in a paired experiment follows from Theorem \ref{prop:student}. As currently constructed the procedure $\varphi_S(\alpha,\Gamma)$ is \textit{not} exact for $H_F$ under $\Gamma > 1$. This is due to $\bar{B}_\Gamma/\text{se}(\bar{B}_\Gamma)$ not being arrangement increasing function within pairs over $\Omega$ under Fisher's sharp null, a property essential for the theoretical development of sensitivity analyses assuming constant effects. As a result, $\bar{B}_\Gamma/\text{se}(\bar{B}_\Gamma)$ need not stochastically bound the distribution of $\bar{D}_\Gamma/\text{se}(\bar{D}_\Gamma)$ under Fisher's sharp null despite the fact that $\bar{B}_\Gamma$ \textit{does} bound the distribution of $\bar{D}_\Gamma$. See \citet[][\S 2.4.4]{obs} for an overview of arrangement increasing test statistics and their importance in sensitivity analysis. Fortunately the lack of stochastic ordering stems solely from behavior in the left tail of $\hat{G}_\Gamma(x)$ when testing a greater than alternative, while under this alternative our interest naturally lies in the behavior of the right tail. As a result, a slight modification to $\varphi_S$ simultaneously provides a sensitivity analysis that is finite-sample exact for $H_F$ and asymptotically valid for $H_N$.

Define the positive part randomization distribution $\hat{G}_{\Gamma+}$ by replacing $\bar{B}_\Gamma/\text{se}(\bar{B}_\Gamma)$ with \\$\max\{0, \bar{B}_\Gamma/\text{se}(\bar{B}_\Gamma)\}$ in (\ref{eq:Ghat}). The distribution can be equivalently expressed as
$\hat{G}_{\Gamma+}(x) = \hat{G}_\Gamma(x)\1\{x \geq 0\}$, from which it is seen that the positive part distribution piles the mass of the negative values in the support of $\hat{G}_\Gamma(x)$ at zero. Consider using the positive part distribution as a reference distribution for a sensitivity analysis,
\begin{align*}\varphi_{S+}(\alpha,\Gamma) &= \1\{\max\{0,\bar{D}_\Gamma/\text{se}(\bar{D}_\Gamma)\} \geq \hat{G}^{-1}_{\Gamma+}(1-\alpha)\}.
\end{align*} 
The following theorem demonstrates that the positive part modification $\varphi_{S+}$ provides an exact sensitivity analysis at $\Gamma$ for Fisher's sharp null, while achieving asymptotic Type I error control under Neyman's null.
\begin{theorem}\label{thm:exact}
Suppose that treatment assignment satisfies (\ref{eq:sensmodel}) for a particular $\Gamma \geq 1$. For any $n \geq 2$, under Fisher's sharp null,
\begin{align*} 
E(\varphi_{S+}(\alpha,\Gamma)\mid H_F)\leq \alpha.
\end{align*}
Furthermore, under Neyman's weak null and under mild regularity conditions,
\begin{align*} 
\underset{n\rightarrow\infty}{\lim}E(\varphi_{S+}(\alpha,\Gamma)\mid H_N)\leq \alpha.
\end{align*}
\end{theorem}
The proof of Theorem \ref{thm:exact}, deferred to the web-based supplement, requires showing that under $H_F$, the positive part test statistic is arrangement increasing. With this established, stochastic dominance follows directly from Theorem 2 of \citet{ros87}. 

Decisions based upon $\varphi_{S+}$ and $\varphi_{S}$ can only differ with large values for the desired significance level $\alpha$, specifically values of $\alpha$ such that critical value with a greater-than alternative would fall \textit{below} zero when using $\hat{G}_\Gamma(x)$ despite the fact that $E(\bar{B}_\Gamma)$ has expectation zero under the null. As $\hat{G}_\Gamma(x)$ converges in probability at each point $x$ to the CDF of a standard normal, a discrepancy between $\varphi_S$ and $\varphi_{S+}$ requires $\alpha \geq 0.5$ asymptotically, much larger than is convention. For commonly employed values of $\alpha$, the tests will coincide except in pathological instances of little practical concern. As a result, the original procedure $\varphi_S$ can be safely thought of as providing an exact and asymptotically robust sensitivity analysis, and our discussion henceforth will concern $\varphi_S$ without the positive part modification.

We call the resulting procedure the \textit{studentized sensitivity analysis}. The procedure maintains its exactness under Fisher's sharp null while providing asymptotic Type I error control even in the presence of heterogeneous effects. We now assess whether or not the studentization was necessary for testing $H_N$. That is, might the permutational $t$ sensitivity analysis using the unstudentized difference-in-means statistic also provide a valid sensitivity analysis for Neyman's weak null?


\section{The permutational $t$-test with hidden bias and effect heterogeneity}\label{sec:implication}
\subsection{Potential for improper size at $\Gamma > 1$}\label{sec:fail}
Recall that the random variable $\bar{B}_\Gamma$ stochastically dominates $\bar{D}_\Gamma$ under Fisher's sharp null as described in \S \ref{sec:add}. In this section, we compare the variance of the random variable $\bar{D}_\Gamma$ to that of $\bar{B}_\Gamma$ while allowing for heterogeneous effects. To do so, it is useful to define a new quantity representing the difference in the averages of the potential outcomes for the two individuals in a given pair, 
\begin{align*}
\eta_i = \frac{Y_{i1}(0) + Y_{i1}(1)}{2} - \frac{Y_{i2}(0) + Y_{i2}(1)}{2}.
\end{align*}
Using $\eta_i$, the treated-minus-control paired difference in any pair $i$ can be expressed as
\begin{align*}\hat{\tau}_i &= \bar{\tau}_i + (T_{i1}-T_{i2})\eta_i,
\end{align*}
such that the true variance of $\hat{\tau}_i$ given $\cF$ and $\cT$ is seen to depend on $\eta_i$ but not on $\bar{\tau}_i$. In the special case of Fisher's sharp null $\bar{\tau}_i = 0$ and $\eta_i = Y_{i1}(0)-Y_{i2}(0)$ for all pairs, recovering the setting of \S \ref{sec:add}.

$\bar{B}_{\Gamma}$, the random variable used to facilitate the permutational $t$ sensitivity analysis under Fisher's sharp null, has variance 
\begin{align}\label{eq:varB}\var(\bar{B}_{\Gamma}) &= \frac{4\Gamma}{n^2(1+\Gamma)^2}\left(\sum_{i=1}^n\bar{\tau}_i^2 + \eta_i^2 + 2\left(2\pi_i-1\right)\eta_i\bar{\tau}_i\right).
\end{align} The last term of (\ref{eq:varB}) would drop out if either Fisher's sharp null were true such that $\bar{\tau}_i=0$ for all $i$, or if $\pi_i=0.5$ as would be the case in a paired experiment. Meanwhile $\bar{D}_{\Gamma}$, the random variable whose distribution we seek to bound when testing $H_N$, has variance
\begin{align}\label{eq:varD}\var(\bar{D}_{\Gamma}) &= \frac{1}{n^2}\sum_{i=1}^n\pi_i(1-\pi_i)\left\{2\eta_i - \left(\frac{\Gamma-1}{1+\Gamma}\right)(|\bar{\tau}_i + \eta_i| - |\bar{\tau}_i - \eta_i|)\right\}^2,\end{align} which would further simplify to $4n^{-2}\sum_{i=1}^n\pi_i(1-\pi_i)\eta_i^2$ under Fisher's sharp null. 

Denote the permutational $t$ sensitivity analysis using the unstudentized difference-in-means described in \S \ref{sec:add} by
\begin{align*}
\varphi_F(\alpha,\Gamma)  = \1\{\bar{D}_\Gamma \geq \hat{F}_\Gamma^{-1}(1-\alpha)\},
\end{align*} where $\hat{F}_\Gamma(a) = \P(\bar{B}_\Gamma \leq a \mid H_F)$ is the worst-case distribution for $\bar{D}_\Gamma$ under the assumption of Fisher's sharp null and $\hat{F}_\Gamma^{-1}(1-\alpha)$ is its $1-\alpha$ quantile. If the sensitivity model (\ref{eq:sensmodel}) holds at $\Gamma=1$ and we correctly conduct inference at $\Gamma= 1$, then $\var(\bar{B}_{1}) = n^{-2}\sum_{i=1}^n(\bar{\tau}_i^2 + \eta_i^2)$, while under the same conditions $\var(\bar{D}_{1}) = n^{-2}\sum_{i=1}^n\eta_i^2 \leq \var(\bar{B}_{1})$. This underpins the well known result that the permutational $t$-test at $\Gamma=1$, $\varphi_F(\alpha,1)$, yields asymptotically conservative inference for $H_N$ in a paired experiment; see, for example, \citet{ima08} and \citet{din17}. Unfortunately, for $\Gamma > 1$ there exist allocations of potential outcomes yielding heterogeneous treatment effects satisfying Neyman's weak null for which $E(\bar{D}_\Gamma) = E(\bar{B}_\Gamma) = 0$ but where $\var(\bar{D}_{\Gamma}) > \var(\bar{B}_{\Gamma})$. This possibility, along with asymptotic normality of both $\bar{D}_{\Gamma}$ and $\bar{B}_{\Gamma}$, account for the following negative result concerning the permutational $t$ sensitivity analysis in the presence of effect heterogeneity.
\begin{theorem}\label{prop:fisher}
Consider conducting a sensitivity analysis at $\Gamma$ for $H_N$ using the permutational $t$, $\varphi_F(\alpha, \Gamma)$. Then, there exist sequences of potential outcomes satisfying the conditions of Theorem \ref{prop:student} and contained in $H_N$ such that, if (\ref{eq:sensmodel}) holds at $\Gamma$
\begin{align*}
\underset{n\rightarrow \infty}{\lim}E(\varphi_F(\alpha, \Gamma)  \mid H_N) > \alpha.
\end{align*}
That is, the permutational $t$ sensitivity analysis can fail to control the Type I error rate under effect heterogeneity if (\ref{eq:sensmodel}) holds at $\Gamma>1$ for certain patterns of effect heterogeneity in the composite null $H_N$.
\end{theorem}

The implication of Theorem \ref{prop:fisher} is that the permutational $t$ does not, in general, provide a valid level-$\alpha$ sensitivity analysis for Neyman's null, and consequently that the studentization developed in \S\S \ref{sec:largesample}-\ref{sec:studentwhole} was indeed well motivated. As a numerical illustration of Theorem \ref{prop:fisher}, let $n$ be even and consider the following allocation for $\bar{\tau}_i$, $\eta_i$, and $\pi_i$:  
\begin{align}\label{eq:counter}
\{\bar{\tau}_i, \eta_i, \pi_i\} &= \begin{cases} \{2.5,5,4/5\} & i=1,...,n/2\\
\{-2.5, 20, 4/5\} & i=n/2 + 1,...,n\end{cases}.\end{align}
While perhaps contrived, this represents a case where the permutational $t$ sensitivity analysis fails to control the Type I error rate even asymptotically. Here $\bar{\tau} = 0$ such that Neyman's weak null is true, (\ref{eq:sensmodel}) holds at $\Gamma=4$, but the effects are heterogeneous such that Fisher's sharp null is false. The random variable $\bar{D}_{4}$ has expectation  $0$ and variance $151.84/n$ from (\ref{eq:varD}), while $\bar{B}_{4}$ has expectation 0 and variance $125.6/n$ from (\ref{eq:varB}). $\bar{D}_4$ not only has an expectation equal to the worst case, but also has a larger variance than $\bar{B}_4$. We thus see that $\bar{D}_4$ has more mass in its right tail than $\bar{B}_4$ attributes, which results in an anti-conservative procedure as $\bar{B}_4$ is the random variable whose randomization distribution is employed in a sensitivity analysis based on the permutational $t$. Asymptotically, the permutational $t$ sensitivity analysis using the distribution of $\bar{B}_4$ as a reference distribution will reject while attempting to maintain the size at $\alpha$ if $n^{1/2}\bar{D}_{4} \geq (125.6)^{1/2}\Phi^{-1}(1-\alpha)$. Because $\var(\bar{D}_4) > \var(\bar{B}_4)$, this event actually occurs with probability $1-\Phi\{(125.6/151.84)^{1/2}\Phi^{-1}(1-\alpha)\} > \alpha$ if $\alpha < 0.5$. For example, at $\alpha = 0.05$, the permutational $t$ sensitivity analysis $\varphi_F(0.05, 4)$ has an asymptotic Type I error rate of 0.067, a reflection of Theorem \ref{prop:fisher}. 

\subsection{Proper asymptotic size at $\Gamma+\epsilon$}\label{sec:epsilon}
How severe is the potential anti-conservativeness of the permutational $t$ sensitivity analysis? A sensitivity analysis typically proceeds by iteratively increasing the value of $\Gamma$ being tested until we transition from rejecting the null hypothesis to failing to reject the null hypothesis. In large samples, the behavior of a sensitivity analysis is bias-dominated \citep{ros04}. As Lemma \ref{lemma:expec} in the appendix illustrates, $E(\bar{D}_\Gamma)\leq 0$ when (\ref{eq:sensmodel}) holds at $\Gamma$ and Neyman's weak null is true. Meanwhile, by centering within its construction $E(\bar{B}_\Gamma) = 0$. The permutational $t$ sensitivity analysis based on the unstudentized difference-in-means uses a randomization distribution that successfully bounds the expectation of $\bar{D}_\Gamma$, and the potential for size greater than $\alpha$ stems only from discrepancies in the variance in those instances when the upper bound on the expectation is tight. Under mild regularity conditions, $E(\bar{D}_\Gamma)$ is a decreasing a function of $\Gamma$ as described in the proof of Theorem \ref{prop:epsilon} in the supplementary material. By conducting a sensitivity analysis at $\Gamma+\epsilon$ when (\ref{eq:sensmodel}) holds at $\Gamma$, we create positive gap between $E(\bar{B}_{\Gamma+\epsilon})$ and $E(\bar{D}_{\Gamma+\epsilon})$. This gap persists asymptotically, while the standard errors scale at the usual rate.  There is hope, then, that the changepoint $\Gamma$ returned using the unstudentized difference-in-means may not be grossly unrepresentative of that of returned by the asymptotically valid studentized procedure $\varphi_S$. As we now formalize, asymptotically the changepoint $\Gamma$ for a sensitivity analysis conducted using the permutational $t$-statistic is arbitrarily close to that of the asymptotically valid procedures in the presence of effect heterogeneity. 

\begin{theorem}\label{prop:epsilon}
Suppose (\ref{eq:sensmodel}) holds at level $\Gamma$ and that the sample average treatment effect equals $\tau$. Consider conducting a sensitivity analysis at level of unmeasured confounding $\Gamma+\epsilon$ for any $\epsilon > 0$ by means of the permutational $t$ sensitivity analysis, $\varphi_F(\alpha, \Gamma+\epsilon)$. Then, under the assumptions of Theorem \ref{prop:fisher},

\begin{align*}\underset{n\rightarrow\infty}{\lim}\E(\varphi_F(\alpha, \Gamma+\epsilon)\mid H_N) &= 0\end{align*} 
That is, if (\ref{eq:sensmodel}) holds at $\Gamma$, the permutational $t$ sensitivity analysis asymptotically commits a Type I error with probability 0 when performed at $\Gamma+\epsilon$, despite potentially having size greater than $\alpha$ when the sensitivity analysis is conducted at $\Gamma$.
\end{theorem}
As a further illustration of Theorem \ref{prop:epsilon}, we return to the example given in in (\ref{eq:counter}). Suppose we conduct a sensitivity analysis using $\varphi_F(0.05, 4.01)$ despite that fact that (\ref{eq:sensmodel}) actually holds at $\Gamma=4$. The random variable $\bar{D}_{4.01}$ has expectation  $-0.01$ and variance $151.86/n$ from (\ref{eq:varD}), while $\bar{B}_{4.01}$ has expectation 0 and variance $125.42/n$ from (\ref{eq:varB}), such that $E(\bar{B}_{4.01}) > E(\bar{D}_{4.01})$ but $\var(\bar{B}_{4.01}) < \var(\bar{D}_{4.01})$. If we conduct a sensitivity analysis at $\Gamma=4.01$ the permutational $t$ sensitivity analysis rejects the null asymptotically if $\bar{D}_{4.01} \geq \Phi^{-1}(1-\alpha)(125.42/n)^{1/2}$, or equivalently if $(\bar{D}_{4.01}  + 0.01)/(151.86/n)^{1/2} \geq  \{\Phi^{-1}(1-\alpha)(125.42/n)^{1/2} + 0.01\}/(151.86/n)^{1/2}$. The left-hand side converges in distribution to a standard normal, while the right-hand side simplifies to $\{0.01 n^{1/2} + 125.42^{1/2}\Phi^{-1}(1-\alpha)\}/151.86^{1/2}$, which goes to $\infty$ as $n$ increases. Hence, the test rejects with probability 0 in the limit. In summary, while $\varphi_F(\alpha, 4)$ does not asymptotically control the Type I error rate for this example as illustrated in the previous section, $\varphi_F(\alpha, 4.01)$  does asymptotically. As Theorem \ref{prop:epsilon} indicates, Type I error control would be attained asymptotically by the procedure $\varphi_F(\alpha, 4+\epsilon)$ for any $\epsilon > 0$, i.e. simply by using the permutational $t$ sensitivity analysis at a slightly larger value of $\Gamma$ than necessary. Creating a positive gap between the actual expectation and the worst-case expectation swamps out any discrepancies in the variances asymptotically, another reflection of bias trumping variance in the presence of hidden bias.

\begin{table}
\begin{center}
\begin{tabular}{c|| c c c c c || c c c c c c}
&\multicolumn{5}{c||}{Permutational $t$: $\varphi_F(0.05, \Gamma)$} &\multicolumn{5}{c}{Studentized: $\varphi_S(0.05, \Gamma)$}\\
\diagbox{$n$}{$\Gamma$}& 4.00 & 4.05 &4.10 &4.20&4.40 & 4.00 & 4.05 & 4.10 &4.20&4.40\\
\hline\hline
50	&	0.076	&	0.074	&	0.072	&	0.059	&	0.048	&	0.061	&	0.059	&	0.052	&	0.041	&	0.032	\\
100	&	0.070	&	0.072	&	0.068	&	0.057	&	0.038	&	0.054	&	0.052	&	0.048	&	0.043	&	0.030\\
500	&	0.067	&	0.058	&	0.048	&	0.036	&	0.013	&	0.050	&	0.041	&	0.035	&	0.025	&	0.009	\\
1000	&	0.065	&	0.055	&	0.042	&	0.024	&	0.007	&	0.048	&	0.038	&	0.030	&	0.016	&	0.005	\\
5000	&	0.066	&	0.038	&	0.021	&	0.005	&	0.000	&	0.049	&	0.025	&	0.012	&	0.004	&	0.000	\\
\end{tabular}
\end{center}
\caption{\label{tab:finite} Size for the  permutational $t$ sensitivity analysis ($\varphi_F)$ and the studentized sensitivity analysis $(\varphi_S)$ for the example in (\ref{eq:counter}) with different sample sizes $n$ (rows) and at different values of $\Gamma$ (columns). The true minimal value of $\Gamma$ for which (\ref{eq:sensmodel}) holds is 4 in this example, such that all columns except for the first for each method conduct the sensitivity analysis at a larger value of $\Gamma$ than necessary, i.e. at $\Gamma=4+\epsilon$ for $\epsilon > 0$. For a sensitivity analysis to be asymptotically valid at a given $\Gamma$, the Type I error rate should fall at or below the desired level $\alpha=0.05$ as $n$ increases. }

\end{table}
The result of Theorem \ref{prop:epsilon} is asymptotic, and hence does not indicate how much larger $\Gamma$ may need to be to achieve Type I error control for finite samples through the permutational $t$. Table \ref{tab:finite} presents Monte Carlo estimates of the Type I error rates of both $\varphi_S(0.05, 4+\epsilon)$ and $\varphi_F(0.05, 4+\epsilon)$ using the example (\ref{eq:counter}) at $n=50, 100, 500, 1000, 5000$ and $\epsilon = 0, 0.05, 0.1, 0.2, 0.4$.  For each combination of $n$ and $\Gamma$, the Type I error rate was estimated based on 10,000 randomizations. For each randomization, the sensitivity analysis was conducted using 1000 Monte Carlo draws according to Algorithm 1 for $\varphi_F$ and Algorithm 2 for $\varphi_S$. The first column of each section of the table presents the sensitivity analysis using $\varphi_F$ and $\varphi_S$ at $\Gamma=4$, which is the minimial value of $\Gamma$ for which the model (\ref{eq:sensmodel}) holds in example (\ref{eq:counter}). Going down the rows for $\varphi_F$ at $\Gamma=4$ provides an illustration of Theorem \ref{prop:fisher}, as the permutational $t$ sensitivity analysis continues to be anti-conservative even at $n=5000$. Indeed, as calculated at the end of \S \ref{sec:fail}, the Type I error rate in this example converges to 0.067 in the limit, such that the anti-conservativeness persists asymptotically at $\Gamma=4$. Down the rows of $\varphi_S$ at $\Gamma=4$, we see the consequences of Theorem \ref{prop:student}, as the studentized sensitivity analysis does provide an asymptotically valid sensitivity analysis for Neyman's weak null. While not shown in the table, the tests $\varphi_{S+}$ and $\varphi_S$ gave the same $p$-value for each randomization and each value of $\Gamma$, further illustrating that the modification required to attain exactness under Fisher's null is of little practical importance or concern. The remaining columns of the table for $\varphi_F$ provide insight into Theorem \ref{prop:epsilon}. For the section corresponding to $\varphi_F$, we see that once the sensitivity analysis is conducted at $\Gamma > 4$, the Type I error rate falls below 0.05 for sufficiently large sample sizes, and indeed will converge to zero in the limit. How large $n$ needs to be to result in the appropriate size in finite samples depends upon $\Gamma$: as the value of $\Gamma$ employed becomes more conservative, the required sample size decreases. For $\varphi_S$ the Type I error rate also goes to zero at $\Gamma > 4$, and in the limit the primary differences between the two methods will lie in their Type I error rates at $\Gamma=4$. 

It is important to keep in mind that the results of this simulation study are not representative of the  typical reality in a paired observational study: the example (\ref{eq:counter}) was actively chosen to illustrate Theorem \ref{prop:fisher}. In many observational studies, the inference conducted at the smallest $\Gamma$ such that (\ref{eq:sensmodel}) holds is itself conservative, as the inference uses the worst-case probability assignments for any $\Gamma$. If all pairs are not affected in the worst possible way, the worst-case expectation employed by the permutational $t$ will be larger than the true expectation. In such observational studies, $\varphi_F$ will typically itself be conservative even at $\Gamma$; see \citet{fog18extend} for additional discussion along with a method accounting for heterogeneous degrees of unmeasured confounding.

\subsection{The benefits of the studentized sensitivity analysis} \label{sec:sim}
We now illustrate through a simulation study that the large-sample sensitivity analysis $\varphi_N$ employing critical values based on the normal distribution can itself be considerably anti-conservative in small samples. In so doing, we highlight the ability of the studentized sensitivity analysis to capture departures from normality in small samples, leading us to recommend the studentized procedure over its large-sample approximation. We once again proceed with the allocation of $\{\bar{\tau}_i, \eta_i, \pi_i\}$ in (\ref{eq:counter}), and set $n=100$. We then generate 10,000 realizations from the resulting biased randomization distribution (\ref{eq:sensmodel}) with this allocation of biased probabilities. Within each realization, we test the null hypothesis $\bar{\tau} = 0$ against the alternative that $\bar{\tau} > 0$ with desired level $\alpha=0.05$ and allowable bias $\Gamma=4$ using $\varphi_N(0.05, 4)$, $\varphi_S( 0.05, 4)$, and $\varphi_F(0.05, 4)$. For $\varphi_S$ and $\varphi_F$, we replace $\hat{F}^{-1}_{4}(0.95)$ and $\hat{G}^{-1}_{4}(0.95)$ with Monte Carlo estimates based on 10,000 randomizations.

\begin{figure}
\begin{center}
\includegraphics[scale=.7]{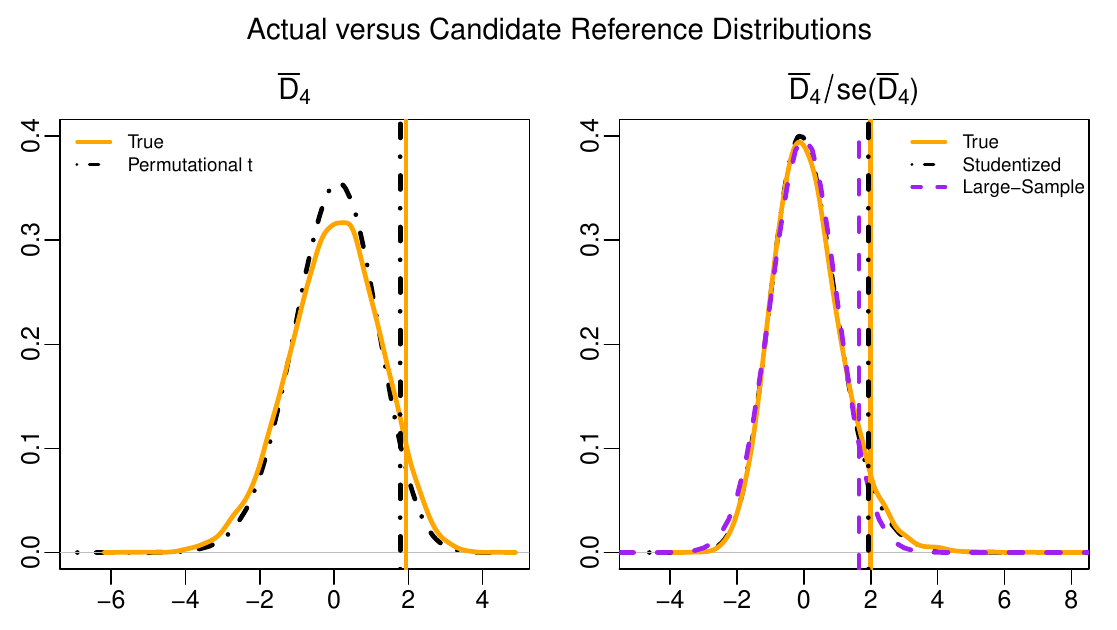}
\caption{Bounding the randomization distribution in the presence of hidden bias in the simulation study of \S \ref{sec:sim}. The sample average treatment effect in this simulation equals 0, and the smallest $\Gamma$ for which (\ref{eq:sensmodel}) holds equals 4. (Left) The left-hand side shows the true distribution of $\bar{D}_{4}$ with a solid line, while the dotted line shows the attempted  bounding distribution utilized by the unstudentized difference-in-means. (Right) The right-hand side shows the true distribution of $\bar{D}_{4}/\text{se}(\bar{D}_4)$ (solid), along with the bounding distributions from the studentized sensitivity analysis (dotted) and from the large-sample normal approximation (dashed). In both figures, the vertical lines correspond to the 0.95 quantile of the displayed distributions.}\label{fig:dist}
\end{center}
\end{figure}

Figure \ref{fig:dist} shows the true distributions of $\bar{D}_{4}$ (left) and $\bar{D}_{4}/\text{se}(\bar{D}_4)$ (right), along with the distributions utilized by the candidate sensitivity analyses at $\Gamma=4$. For the sensitivity analysis to control the size at $\alpha$, the bounding distribution needs a fatter right tail than the true distribution, such that the quantiles generated by each procedure should fall to the right of those of the true distribution. The left-hand side of Figure \ref{fig:dist} illustrates that for the permutational $t$ sensitivity analysis the opposite holds: the quantiles of the true distribution fall to the right of those of the candidate worst-case distribution, a reflection of Theorem \ref{prop:fisher}. The estimated Type I error rate for the unstudentized difference-in-means based sensitivity analysis exceeded 0.05, and was in fact 0.0702. That is, the permutational $t$ sensitivity analysis fails to control the Type I error rate. The right-hand side of Figure \ref{fig:dist} illustrates that the Neyman-style large-sample test, while valid in sufficiently large samples by Theorem \ref{prop:neyman}, also fails to bound the true distribution of $\bar{D}_{4}/\text{se}(\bar{D}_{4})$ in this finite sample simulation. The large-sample test simply uses a normal approximation to the distribution of $\bar{D}_{4}/\text{se}(\bar{D}_{4})$, while the figure illustrates that the true distribution exhibits skewness. The Type I error rate is even worse than that of the permutational $t$ sensitivity analysis, estimated at 0.0798. The studentized sensitivity analysis, asymptotically valid by Theorem \ref{prop:student}, is able to capture the skewness in the true distribution of $\bar{D}_{4}/\text{se}(\bar{D}_4)$. Figure \ref{fig:dist} shows that the estimated $95$th percentile for the studentized test is virtually identical to that of the true distribution. This yields a Type I error rate of 0.054, coming much closer to the desired level $\alpha=0.05$.

\section{Data examples and illustrations}\label{sec:example}
\subsection{Four paired observational studies}
We now compare sensitivity analyses conducted using the permutational $t$-test and the studentized sensitivity analysis in four paired observational studies. Through doing so, we further highlight the practical similarities in reported insensitivity to hidden bias attained through these two methods despite the differences in theoretical guarantees presented in \S\S \ref{sec:studentwhole}-\ref{sec:implication}. We now briefly describe the four observational studies to be analyzed.
\begin{itemize}
\item \textbf{Alcoholism and Genetic Damage (n=20)}. \citet{maf00} paired 20 alcoholics (defined as having consumed $>120$ grams of per alcohol per day) with 20 non-alcoholics (consuming between 8 and 13 grams per day) on the basis of covariates such as gender, age, and smoking habit to assess the impact of excessive alcohol intake on genetic damage.
\item \textbf{Welding and Genetic Damage (n=39)}. \citet{wer98} paired $n=39$ male welders to male non-welders using smoking habits and age to assess the impact of occupational exposure to potential carginogens such as nickel and cadmium on genetic damage.
\item \textbf{Smoking and Lead Levels (n=250)}. \citet{ros13} paired $n=250$ daily smokers to non-smokers controlling for gender, age, race, education level, and household income, and compared blood lead levels between the two groups. The data were from the 2007-2008 National Health and Nutrition Examination Survey (NHANES).
\item  \textbf{Smoking and Periodontal Disease (n=441)}. \citet{ros16} paired $n=441$ daily smokers to non-smokers controlling for gender, age, race, education level, and household income. The outcome variable was a measure of disease on the 14 lower teeth for each individual, with larger values indicating a more severe case of periodontal disease. The data were from the 2011-2012 National Health and Nutrition Examination Survey (NHANES).

\end{itemize}

These observational studies vary considerably in terms of sample size, ranging from $n=20$ to $n=441$, hence representing a comparison of the two methods in small and moderate sample regimes. In each study, arguments can be made for the treatment effect varying across individuals. How misleading might the sensitivity analysis based on the permutational $t$-test be relative to an analysis using the studentized sensitivity analysis, which is asymptotically valid for Neyman's weak null accommodating heterogeneous effects?

\subsection{Sensitivity values and intervals}\label{sec:result}
For each observational study, sensitivity analyses were conducted using both the permutational $t$-test and the studentized sensitivity analysis at $\alpha=0.01$, $0.05$, and $0.10$. For each sensitivity analysis, we found the largest value of $\Gamma$ such that the sensitivity analysis continued to reject the null. This changepoint value of $\Gamma$ is also known as the \textit{sensitivity value} of an observational study \citep{zha18}, and quantifies the magnitude of unmeasured confounding required to alter the findings of the observational study. It would be disconcerting if $\varphi_F$ and $\varphi_S$ could return drastically different sensitivity values for a given observational study; fortunately, the results summarized in Table \ref{tab:change} reveal that this was not the case for any of the sensitivity analyses conducted. Indeed, the sensitivity value returned by the permutational $t$ never exceeded that of the studentized approach by more than a factor of 1.01. In some instances, $\varphi_F$ actually returned a lower reported insensitivity to unmeasured confounding than $\varphi_S$. Investigating precisely for which distributions of treated-minus-control paired differences can be expected to occur is an ongoing area of research.

\begin{table}
\begin{center}
\begin{tabular}{c || c c | c c | c c | c c }
\multicolumn{1}{c}{}&\multicolumn{2}{c}{\citet{maf00}}& \multicolumn{2}{c}{\citet{wer98}}&\multicolumn{2}{c}{\citet{ros13}}& \multicolumn{2}{c}{\citet{ros16}}\\
\multicolumn{1}{c}{}&\multicolumn{2}{c}{$n=20$}& \multicolumn{2}{c}{$n=39$}& \multicolumn{2}{c}{$n=250$}&\multicolumn{2}{c}{$n=441$}\\

$\alpha$ & Perm. $t$ & Student &Perm. $t$&Student &Perm. $t$&Student&Perm. $t$&Student\\
\hline
0.01&3.132&3.138 & 3.029 & 2.994&1.640 & 1.628&2.392&2.433 \\
0.05&5.041 & 5.042& 4.231 & 4.239 &1.908 & 1.901&2.657 & 2.701\\
0.10&6.655 & 6.655& 5.150 & 5.208 &2.078 & 2.073&2.817& 2.856\\
\end{tabular}
\caption{\label{tab:change} The largest value of $\Gamma$ such that the null hypothesis is rejected using both the permutational $t$ and the studentized sensitivity analyses in four observational studies. Results are shown for $\alpha=0.01, 0.05$ and $0.10$.}

\end{center}
\end{table}

As further demonstration of the similarities between the two methods, we now compare sensitivity intervals constructed using the two tests. A sensitivity interval is an extension of a confidence interval to observational studies with hidden bias bounded by a particular $\Gamma$. Its interpretation remains the same: a 100(1-$\alpha$) sensitivity interval is the set of treatment effects $\bar{\tau}$ cannot be rejected by a two-sided level$-\alpha$ sensitivity analysis conducted at $\Gamma$, and can thus be attained through the inversion of a test. Those attained using the permutational $t$ implicitly assume the effect is constant, while those attained through the studentized sensitivity analysis make no such assumption. At $\Gamma=1$, the studentized sensitivity analysis furnishes confidence intervals which are asymptotically equivalent to the conventional large-sample confidence intervals in a paired experiment, of the form $\hat{\tau} \pm \Phi^{-1}(1-\alpha/2)\text{se}(\hat{\tau})$. 

Table \ref{tab:SI} compares 90\% sensitivity intervals attained using $\varphi_F$ and $\varphi_S$ for these four paired observational studies at $\Gamma=3$. Once again, while the methods do not result in identical intervals they are quite similar in terms of both length and values at the endpoints. In two studies the studentized sensitivity analysis resulted in larger intervals, while in the other two the permutational $t$ provided larger intervals. Importantly, systematic differences do not exist between the sensitivity intervals created through the permutational $t$ and the studentized sensitivity analysis.

\begin{table}
\begin{center}
\begin{tabular}{c|| c  c | c c}
 \multicolumn{1}{c}{}& \multicolumn{2}{c}{Permutational $t$, $\Gamma=3$} & \multicolumn{2}{c}{Studentized, $\Gamma=3$} \\
Study &90\% Interval & Length & 90\% Interval & Length\\
\hline
\citet{maf00}&(1.13, 9.63) & 8.50 & (1.18, 9.42)&8.24\\
\citet{wer98}&(0.11, 1.14) &1.03& (0.11, 1.15)&1.04\\
\citet{ros13} &(-0.33, 1.99) & 2.32& (-0.35, 2.20)&2.55\\
\citet{ros16} &(-0.70, 15.88)&16.58& (-0.61, 15.92)&16.53
\end{tabular}
\caption{\label{tab:SI} 90\% sensitivity intervals at $\Gamma=3$ through inverting the permutational $t$ sensitivity analysis and the studentized sensitivity analysis in the four observational studies.}

\end{center}
\end{table}

%
\
\section{Concluding remarks}
The subtleties of the constant treatment effects model, along with the differences in the implications of additivity in randomized experiments versus observational studies, may be lost on practitioners employing sensitivity analyses. Researchers may well have in mind the null of no effect on average when performing a sensitivity analysis using the unstudentized difference-in-means, justified under the assumption of constant effects. Applied sensitivity analyses typically report the minimal value of $\Gamma$ for which, at a given level $\alpha$, the hypothesis test fails to reject the null. The sensitivity analysis based on the unstudentized difference-in-means, $\varphi_F(\alpha, \Gamma)$, fails to control the Type I error rate when (\ref{eq:sensmodel}) holds at $\Gamma$ in the presence of effect heterogeneity; however as Theorem \ref{prop:epsilon} demonstrates, simply conducting a permutational $t$ sensitivity analysis at $\Gamma+\epsilon$ for any $\epsilon > 0$ eliminates the problem asymptotically. From a practical perspective, this suggests that the reported changepoint values of $\Gamma$ for which studies can no longer reject the null hypothesis when using the traditional approach, and with it the perceived robustness of a study's findings to unmeasured confounding, will likely not be substantively larger than what is justified. The discrepancies, or lack thereof, in the observational studies analyzed in \S \ref{sec:result} align with this narrative.

Based on the perception at the time, \citet{cop97} wrote of Rosenbaum's model that it ``can be applied to various generalizations of the signed rank test...but would be more complicated for other statistics (e.g. a $t$-test)'' \citep[][p. 71]{cop97}. The studentized sensitivity analysis not only overcomes these difficulties (indeed its large-sample equivalent presented in \S \ref{sec:largesample} directly extends the parametric $t$-test to observational studies), but also demonstrates that the effort was worth the while. The studentized sensitivity analysis provides an exact sensitivity analysis for Fisher's sharp null while maintaining the desired level asymptotically under Neyman's weak null by using a biased randomization distribution along with studentization to conduct inference. The method presented herein provides a natural modification of the permutational $t$ sensitivity analysis for accommodating heterogeneous effects. Under constant effects, the permutational $t$ typically performs worse in a sensitivity analysis than other choices of test statistics, such as the signed rank test, certain $u$-statistics, and certain $m$-statistics \citep{ros07, ros11}, leading it to not be favored for sensitivity analyses in practice. The extent to which studentized versions of those tests provide valid sensitivity analyses for the sample average treatment effect, or for other meaningful causal estimands, remains an open question. The machinery underpinning the results presented herein should prove beneficial in addressing that question.

This work serves to further dispel the notion that the model of \citet[][\S 4]{obs} is only useful for testing sharp null hypotheses. Sensitivity analyses for composite nulls on binary outcomes have been developed in \citet{ros02att} and \citet{fog17}, while in the case of continuous outcomes \citet{ros03} presents an exact sensitivity analysis for the Walsh averages. By providing a sensitivity analysis for the sample average treatment effect while accommodating effect heterogeneity, we hope to further enable and encourage researchers to conduct sensitivity analyses when inferring treatment effects in observational studies.

\appendix
\section{Proofs}
\subsection{Notation, regularity conditions, and a note on asymptotics}

Before presenting the regularity conditions and additional notation to facilitate the proofs, a clarification regarding the theorems in the main article is warranted. Theorems 1-5 are presented conditional upon Neyman's null $H_N$ being true, such that $n^{-1}\sum_{i=1}^n\bar{\tau}_i = 0$; however, for Neyman's null to be true at each point along an asymptotic sequence it would have to be the case that $\bar{\tau}_i = 0$ for all pairs. For precision, the asymptotics should instead reflect a sequence of observational studies of increasing sample size. For each number of pairs $n$ the treatment effects in each pair $\bar{\tau}_i$ should be adaptively re-centered by the sample average treatment effect in the first $n$ pairs, call it $\bar{\tau}^{(n)}$, such that Neyman's null would hold after the re-centering. For instance, for each $n$ the random variable $D_{i,\Gamma}$ should be re-defined as
\begin{align*}
D_{i,\Gamma}^{(n)} = \hat{\tau}_i - \bar{\tau}^{(n)} - (2\theta_\Gamma-1)|\hat{\tau}_i - \bar{\tau}^{(n)}|
\end{align*}
when considering the results in Theorems 1-5. We have omitted this in the text and in the proofs in this web supplement, trading precision for notational convenience and enhanced readability. 

The following regularity conditions are imposed throughout the proofs that follows.  As a reminder, the quantity $\eta_i$ is defined as in the main article to represent the difference in the averages of the potential outcomes for the two individuals in a given pair, 

\begin{condition}\label{cond:1}There exist constants $C > 0$, $\mu_m$ and $\mu_a$ such that as $n\rightarrow\infty$
\begin{align}\label{eq:CLT}
n^{-1}\sum_{i=1}^n|\eta_i| > C,\;\;  n^{-1}\sum_{i=1}^n\eta_i^2 > C,\\
\label{eq:cheby} n^{-2}\sum_{i=1}^n\eta_i^2\rightarrow 0,\;\; n^{-2}\sum_{i=1}^n\eta_i^4 \rightarrow 0,\;\; n^{-2}\sum_{i=1}^n\bar{\tau}_i^4 \rightarrow 0,\\
\label{eq:1moment}n^{-1}\sum_{i=1}(2\pi_i-1)\eta_i \rightarrow \mu_m,\;\;\; n^{-1}\sum_{i=1}^n\pi_i|\bar{\tau}_i+\eta_i| + (1-\pi_i)|\bar{\tau}_i- \eta_i| \rightarrow \mu_a.
\end{align}
\end{condition}
\begin{condition}\label{cond:2}
There exists a constant $\nu^2 > 0$ such that
\begin{align} \label{eq:2moment} n^{-1}\sum_{i=1}^n\pi_i(\bar{\tau}_i+ \eta_i)^2 + (1-\pi_i)(\bar{\tau}_i - \eta_i)^2\rightarrow \nu^2. \end{align}
\end{condition}

In the proofs that follow, let $\theta_\Gamma = \Gamma/(1+\Gamma)$, such that if the sensitivity model holds at $\Gamma$ we have that $1-\theta_\Gamma \leq \pi_i \leq \theta_\Gamma$ and that $(2\theta_\Gamma-1) = (\Gamma-1)/(1+\Gamma)$. Further, all results should be viewed as conditional upon $\cF$ and $\cT$; this has been omitted for enhanced readability.

\subsection{Theorem 1}
Define $n$ new random variables $U_{i,\Gamma}$ by 
\begin{align*} U_{i,{\Gamma}} &= \bar{\tau}_i + V_{i,\Gamma}|\eta_i| - \left(2\theta_\Gamma-1\right)\{(1 + V_{i,\Gamma})|\bar{\tau}_i + |\eta_i|| + (1-V_{i, \Gamma})|\bar{\tau}_i - |\eta_i||\}/2,\end{align*} where the random variables $V_{i,\Gamma}$ are distributed as in \S \ref{sec:add}.
\begin{lemma}\label{lemma:ub}
Suppose treatment assignment satisfies (\ref{eq:sensmodel}) at $\Gamma$.  Then, for any scalar $k$,
\begin{align}\label{eq:ub} \P(\bar{D}_{ \Gamma} \geq k) &\leq \P\left(\bar{U}_{\Gamma}\geq k\right) 
\end{align}Moreover, the upper bound is sharp in that sense that (\ref{eq:ub}) holds if $\pi_i=\theta_{\Gamma}$ and $\eta_i \geq -\eta_i$ for $i=1,...,n$.
\end{lemma}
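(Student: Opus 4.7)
The plan is to reduce the inequality to a coordinate-wise stochastic dominance between two two-point random variables, then to lift this to the sum via a coupling argument that uses independence across pairs.

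First I would establish that $D_{i,\tau,\Gamma}$ and $U_{i,\tau,\Gamma}$ are supported on the same pair of values. Define $f(x) = x - (2\theta_\Gamma - 1)|x|$; since $\theta_\Gamma \in [1/2,1)$ for $\Gamma \geq 1$, $f$ is strictly increasing (slope $2(1-\theta_\Gamma) > 0$ on $[0,\infty)$, slope $2\theta_\Gamma > 0$ on $(-\infty,0)$). Set
\begin{align*}
a_i = f(\Delta_i - \tau + |\eta_i|), \qquad b_i = f(\Delta_i - \tau - |\eta_i|),
\end{align*}
so that $a_i \geq b_i$ by monotonicity of $f$. Using $Y_i - \tau = \Delta_i - \tau + V_i\eta_i$, a direct case analysis on the sign of $\eta_i$ shows that $D_{i,\tau,\Gamma} = f(Y_i - \tau)$ equals $a_i$ when $V_i\eta_i \geq 0$ and $b_i$ when $V_i\eta_i < 0$. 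The same expansion of the expression defining $U_{i,\tau,\Gamma}$ shows it equals $a_i$ when $\tilde V_{i,\Gamma}=1$ and $b_i$ when $\tilde V_{i,\Gamma}=-1$.

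Next I would compare the probabilities on the larger atom $a_i$. Under (\ref{eq:sensmodel}), if $\eta_i \geq 0$, then $\P(D_{i,\tau,\Gamma}=a_i \mid \cF_I,\cZ_I) = \P(V_i=1)=\pi_i \leq \theta_\Gamma$; if $\eta_i < 0$, then $\P(D_{i,\tau,\Gamma}=a_i \mid \cF_I,\cZ_I) = \P(V_i=-1) = 1-\pi_i \leq \theta_\Gamma$. In both cases $\P(D_{i,\tau,\Gamma}=a_i) \leq \theta_\Gamma = \P(U_{i,\tau,\Gamma}=a_i)$, which for a two-point distribution on $\{a_i, b_i\}$ with $a_i \geq b_i$ is precisely the statement that $D_{i,\tau,\Gamma}$ is stochastically dominated by $U_{i,\tau,\Gamma}$. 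Equivalently, there is a coupling $(\tilde D_{i,\tau,\Gamma}, \tilde U_{i,\tau,\Gamma})$ on a common probability space such that $\tilde D_{i,\tau,\Gamma} \leq \tilde U_{i,\tau,\Gamma}$ almost surely and the marginals agree.

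To upgrade this to the sum, I would use the fact that the $V_i$'s are independent under (\ref{eq:sensmodel}) conditional on $\cF_I$ and $\cZ_I$, and the $\tilde V_{i,\Gamma}$'s are independent by construction. Constructing the coupling described above separately in each pair yields independent couplings and hence $\sum_i \tilde D_{i,\tau,\Gamma} \leq \sum_i \tilde U_{i,\tau,\Gamma}$ almost surely, so $\bar D_{\tau,\Gamma}$ is stochastically dominated by $\bar U_{\tau,\Gamma}$ and (\ref{eq:ub}) follows. For sharpness, observe that when $\pi_i = \theta_\Gamma$ and $\eta_i \geq 0$ (so $|\eta_i| = \eta_i$) for every $i$, the atom-probability computation above becomes an equality, $\P(D_{i,\tau,\Gamma} = a_i) = \pi_i = \theta_\Gamma = \P(U_{i,\tau,\Gamma}=a_i)$, so $D_{i,\tau,\Gamma}$ and $U_{i,\tau,\Gamma}$ are identically distributed and the sums agree in distribution.

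The only mildly delicate step is the identification $D_{i,\tau,\Gamma}\in\{a_i,b_i\}$ when $\eta_i$ may be negative; this is where one verifies that the definition of $U_{i,\tau,\Gamma}$ (built with $|\eta_i|$ and the appropriate symmetrization $(1\pm \tilde V_{i,\Gamma})/2$) was chosen precisely so the two-point supports coincide. Everything else is bookkeeping: once the correct parametrization is in place, the stochastic dominance is immediate from the model bound $\pi_i \in [1-\theta_\Gamma,\theta_\Gamma]$.
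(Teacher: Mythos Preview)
Your proof is correct and follows essentially the same route as the paper: both use monotonicity of $f(x)=x-(2\theta_\Gamma-1)|x|$ to order the two atoms, compare the probability on the larger atom against $\theta_\Gamma$ via the model bound $\pi_i\in[1-\theta_\Gamma,\theta_\Gamma]$, and then lift coordinate-wise stochastic dominance to the sum by independence. The paper's version is terser---it phrases the dominance as $\tilde V_{i,\Gamma}|\eta_i|$ stochastically dominating $V_i\eta_i$ and cites a standard inequality (Ahmed, 1981, Lemma~3.3) for the sum step---whereas you spell out the two-point support and the coupling explicitly, which is a perfectly fine and arguably clearer way to present the same argument.
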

\begin{proof}
$(T_{i1}-T_{i2})\eta_i = \pm |\eta_i|$ and, by (\ref{eq:sensmodel}), $1-\theta_{\Gamma} \leq \P((T_{i1}-T_{i2})\eta_i = |\eta_i| ) \leq \theta_{\Gamma}$. Now, $V_{i,\Gamma}|\eta_i| = \pm |\eta_i|$, and $\P(V_{i,\Gamma}|\eta_i| = |\eta_i| ) = \theta_\Gamma$, so that $V_{i,\Gamma}|\eta_i|$ stochastically dominates $(T_{i1}-T_{i2}) \eta_i$. As the function $x - (2\theta_\Gamma-1)|x|$ is monotone nondecreasing for all $x$ and for any $\Gamma\geq 1$, we have that $\bar{\tau}_i + |\eta_i| - (2\theta_\Gamma-1)|\bar{\tau}_i + |\eta_i|| \geq \bar{\tau}_i - |\eta_i| - (2\theta_\Gamma-1)|\bar{\tau}_i - |\eta_i||$. The random variable $D_{i, \Gamma}$ is thus stochastically dominated by $U_{i, \Gamma}$ for each $i$, and (\ref{eq:ub}) then follows from a standard probability inequality \citep[see, e.g.,][Lemma 3.3]{ahm81}.
\end{proof}
\begin{lemma}\label{lemma:expec} If (\ref{eq:sensmodel}) holds at $\Gamma$, then
\begin{align*}
E(\bar{U}_{\Gamma}) \leq 4\theta_\Gamma(1-\theta_\Gamma)\bar{\tau}.
\end{align*}
In particular, if $H_N$ is true, then $E(\bar{U}_{\Gamma}) \leq 0$.
\begin{proof}
 For each $i$, $E(U_{i, \Gamma}) = (2\theta_\Gamma-1)|\eta_i| + \bar{\tau}_i  - (2\theta_\Gamma-1)\{\theta_\Gamma|\bar{\tau}_i + |\eta_i|| + (1-\theta_\Gamma)|\bar{\tau}_i  - |\eta_i||\}$.  We now show that $\theta_\Gamma|\bar{\tau}_i + |\eta_i|| + (1-\theta_\Gamma)|\bar{\tau}_i| \geq |\eta_i| + (2\theta_\Gamma-1)\bar{\tau}_i$. We do this in three cases depending upon the values for $sign(\bar{\tau}_i + |\eta_i|)$ and $sign(\bar{\tau}_i - |\eta_i|)$

\begin{case}[$\bar{\tau}_i + |\eta_i| \geq 0$, $\bar{\tau}_i - |\eta_i| \geq 0$]

Here $\bar{\tau}_i \geq |\eta_i|$. Recalling that $0 \leq 2\theta_\Gamma-1 \leq 1$, \begin{align*}
\theta_\Gamma|\bar{\tau}_i + |\eta_i|| + (1-\theta_\Gamma)|\bar{\tau}_i - |\eta_i|| &= (2\theta_\Gamma-1)|\eta_i| + \bar{\tau}_i\\ &\geq |\eta_i| + (2\theta_\Gamma-1)\bar{\tau}_i.\end{align*}\end{case}
\begin{case}[$\bar{\tau}_i + |\eta_i| \geq 0$, $\bar{\tau}_i - |\eta_i|  < 0$]

Here we have that the result holds with equality, as $\theta_\Gamma|\bar{\tau}_i + |\eta_i|| + (1-\theta_\Gamma)|\bar{\tau}_i - |\eta_i|| =  |\eta_i| + (2\theta_\Gamma-1)\bar{\tau}_i$. \end{case}

\begin{case}[$\bar{\tau}_i + |\eta_i| < 0$, $\bar{\tau}_i - |\eta_i|  < 0$]

In this case $-(\bar{\tau}_i - \tau) \geq |\eta_i|$. Noting $-\bar{\tau}_i = - 2\theta_\Gamma\bar{\tau}_i + (2\theta_\Gamma-1)\bar{\tau}_i$ and that $2\theta_\Gamma, (2\theta_\Gamma-1)\geq 0$,\begin{align*}
\theta_\Gamma|\bar{\tau}_i + |\eta_i|| + (1-\theta_\Gamma)|\bar{\tau}_i - |\eta_i|| &= (1-2\theta_\Gamma)|\eta_i| - \bar{\tau}_i\\ &\geq |\eta_i| + (2\theta_\Gamma-1)\bar{\tau}_i.\end{align*} \end{case}
The inequality thus always holds. Hence,
\begin{align*}E(U_{i, \Gamma}) 
&\leq \left(2\theta_\Gamma-1\right)|\eta_i| + \bar{\tau}_i - \left(2\theta_\Gamma-1\right)\left\{|\eta_i| + (2\theta_\Gamma - 1)\bar{\tau}_i\right\}\\
&= 4\theta_\Gamma(1-\theta_\Gamma)\bar{\tau}_i.\end{align*} As this holds for all $i$, the result for the average follows.
\end{proof}
\end{lemma}
\begin{lemma}\label{lemma:var}
For any constant $\Gamma\geq 1$ 
\begin{align*}
\E(\text{se}(\bar{D}_\Gamma)^2 )- \var(\bar{D}_{ \Gamma}) &= \frac{1}{n(n-1)} \sum_{i=1}^n\left(E(D_{i,  \Gamma} ) - E(\bar{D}_{\Gamma} )\right)^2 \geq 0.
\end{align*}
\begin{proof}

\begin{align*}
\E(\text{se}(\bar{D}_\Gamma)^2)&=\frac{1}{n(n-1)}\left\{\sum_{i=1}^n E(D_{i,  \Gamma}^2) - n^{-1}\sum_{k,\ell=1}^n\E(D_{k,   \Gamma}D_{\ell, \Gamma})\right\}\\
&= n^{-2}\left\{\sum_{i=1}^n\E(D_{i,  \Gamma}^2) - \frac{1}{n-1}\sum_{k\neq \ell}E(D_{k,  \Gamma})E(D_{\ell,  \Gamma}\right\}\\
&= n^{-2}\left[\sum_{i=1}^n\left\{\var(D_{i,  \Gamma}) + E(D_{i,  \Gamma})^2\right\} - \frac{1}{n-1}\sum_{k\neq \ell}E(D_{k,  \Gamma})E(D_{\ell,  \Gamma})\right]\\
&= \var(\bar{D}_{ \Gamma}) + \frac{1}{n(n-1)}\sum_{i=1}^n\left\{E(D_{i,  \Gamma}) - E(\bar{D}_{\Gamma} )\right\}^2,
\end{align*}
proving the result.
\end{proof}
\end{lemma}
\begin{remark}\label{remark:var}The result of Lemma \ref{lemma:var} applies beyond the collection of random variables $\{D_{i, \Gamma}\}$. Take any collection of $n$ independent random variables $\{X_i\}$ with $E(X_i) = \mu_{i}$ and $\var(X_i) = \sigma^2_i$, and consider their random average $\bar{X}$. Then, $E(se(\bar{X})^2) - \var(\bar{X}) = ((n-1)n)^{-1}\sum_{i=1}^n(\mu_{i} - \bar{\mu})^2 \geq 0$.
\end{remark}

\begin{lemma}\label{lemma:vub} 
If (\ref{eq:sensmodel}) holds at $\Gamma$, then
\begin{align*}
\var\left(\bar{U}_{ \Gamma}\right) &\leq \E(\text{se}(\bar{D}_\Gamma)^2 ) \end{align*}
\begin{proof}
 By Lemma \ref{lemma:var}, it suffices to show that $\var(D_{i,  \Gamma}) \geq \var(U_{i, \Gamma})$ for all $i$. Since (\ref{eq:sensmodel}) holds at $\Gamma$, \begin{align*}
 \var(D_{i,  \Gamma} ) &= \pi_i(1-\pi_i)\left\{2\eta_i - \left(2\theta_\Gamma-1\right)(|\bar{\tau}_i + \eta_i| - |\bar{\tau}_i - \eta_i|)\right\}^2\\ &\geq \theta_{\Gamma}(1-\theta_{\Gamma})\left\{2|\eta_i| - \left(2\theta_\Gamma-1\right)(|\bar{\tau}_i + |\eta_i|| - |\bar{\tau}_i - |\eta_i||)\right\}^2 = \var(U_{i,\Gamma}),\end{align*} proving the result. 
\end{proof}
\end{lemma}
\begin{lemma}\label{lemma:momentbounds} 
For each $i$, 
\begin{align}\label{eq:U2moment}
16\theta_{\Gamma}(1-\theta_{\Gamma})^3\eta_i^2 \leq \var(U_{i,  \Gamma} ) \leq 16\theta_{\Gamma}^3(1-\theta_{\Gamma})\eta_i^2\\ 
\label{eq:U4moment}E(U_{i,  \Gamma}^4) \leq 128\theta_{\Gamma}^4\left(\bar{\tau}_i^4 + \eta_i^4\right)
\end{align}
Further, if treatment assignment satisfies (\ref{eq:sensmodel}) at $\Gamma$,
\begin{align}\label{eq:D2moment}
16\theta_{\Gamma}(1-\theta_{\Gamma})^3\eta_i^2 \leq \var(D_{i,  \Gamma} ) \leq 4\theta_{\Gamma}^2\eta_i^2\\ 
\label{eq:D4moment}E(D_{i,  \Gamma}^4) \leq 128\theta_{\Gamma}^4\left(\bar{\tau}_i^4 + \eta_i^4\right)
\end{align}
\begin{proof}
To prove (\ref{eq:U2moment}), observe that $\var(U_{i,\Gamma} ) = \theta_{\Gamma}(1-\theta_{\Gamma})(2\eta_i - (2\theta_{\Gamma}-1)(|\bar{\tau}_i+ \eta_i| - |\bar{\tau}_i-\eta_i|))^2$, which is at least $16\theta_{\Gamma}(1-\theta_{\Gamma})^3\eta_i^2$ and at most $16\theta_{\Gamma}^3(1-\theta_{\Gamma})\eta_i^2$. The proof of (\ref{eq:D2moment}) simply replaces $\theta_{\Gamma}(1-\theta_{\Gamma})$ with $1/4$ in the upper bound.

Proving (\ref{eq:U4moment}) requires multiple applications of $(a+b)^2 \leq 2a^2+2b^2$ for scalars $a$ and $b$. Without loss of generality assume that $\eta_i \geq -\eta_i$.
\begin{align*}E(U_{i,\Gamma}^4 ) &= \theta_{\Gamma}(\bar{\tau}_i + \eta_i - (2\theta_{\Gamma}- 1)|\bar{\tau}_i+\eta_i|)^4\\& + (1-\theta_{\Gamma})(\bar{\tau}_i- \eta_i - (2\theta_{\Gamma}-1)|\bar{\tau}_i-\eta_i|)^4\\
& \leq \theta_{\Gamma}(2\theta_{\Gamma}(\bar{\tau}_i+\eta_i))^4 + (1-\theta_{\Gamma})(2\theta_{\Gamma}(\bar{\tau}_i-\eta_i))^4\\
&\leq 128\theta_{\Gamma}^4\left(\bar{\tau}_i^4 + \eta_i^4\right).
\end{align*}
The proof of (\ref{eq:D4moment}) is analogous.
\end{proof}
\end{lemma}
\begin{lemma}\label{lemma:kstar} Both $n^{1/2}\bar{D}_{\Gamma}$ and $n^{1/2}\bar{U}_{\Gamma}$ are asymptotically normal. Further, let \begin{align}\label{eq:kstar2}k_{\Gamma}^*(\alpha) =  \Phi^{-1}(1-\alpha)\left\{n^{-2}\sum_{i=1}^n\var(U_{i,\Gamma} )\right\}^{1/2}.\end{align}
Then, if (\ref{eq:sensmodel}) holds at $\Gamma$ and $H_N$ is true,
\begin{align}\label{eq:kstar} \underset{n\rightarrow\infty}{\lim}\P\{\bar{D}_{{\Gamma}} \geq k^*_{\Gamma}(\alpha) \} \leq \alpha.
\end{align}
\end{lemma}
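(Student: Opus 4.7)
The plan is to combine three ingredients already in place. First, establish a Lyapunov central limit theorem for the independent but non-identically distributed summands $\{U_{i,\tau,\Gamma} - E(U_{i,\tau,\Gamma}\mid\cF_I,\cZ_I)\}$, and separately for $\{D_{i,\tau,\Gamma} - E(D_{i,\tau,\Gamma}\mid\cF_I,\cZ_I)\}$, to get the stated asymptotic normality. Second, invoke Lemma \ref{lemma:ub} to pass the tail probability for $\bar{D}_{\tau,\Gamma}$ to its stochastic upper envelope $\bar{U}_{\tau,\Gamma}$. Third, exploit Lemma \ref{lemma:expec} together with the observation that $k^*_{\tau,\Gamma}(\alpha)$ is, by construction, exactly $\Phi^{-1}(1-\alpha)$ times the standard deviation of $\bar{U}_{\tau,\Gamma}$, so that after standardization the limiting tail probability is $\alpha$.

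To verify Lyapunov's condition (with $\delta=2$) for $\bar{U}_{\tau,\Gamma}$, note that by (\ref{eq:U2moment}) and the second inequality in (\ref{eq:CLT}),
$$\sum_{i=1}^I\var(U_{i,\tau,\Gamma}\mid\cF_I,\cZ_I)\;\geq\;16\theta_\Gamma(1-\theta_\Gamma)^3\sum_{i=1}^I\eta_i^2\;\geq\;16\theta_\Gamma(1-\theta_\Gamma)^3 CI,$$
so the squared sum of variances is of order at least $I^2$. Using $(a-b)^4 \leq 8a^4 + 8b^4$ and Jensen's inequality in combination with (\ref{eq:U4moment}), the sum of fourth central absolute moments is bounded above by a constant multiple of $\sum_i\{(\Delta_i-\tau)^4 + \eta_i^4\}$, which upon expansion of $(\Delta_i-\tau)^4$ is $o(I^2)$ by (\ref{eq:cheby}) (the residual $I\tau^4$ is also $o(I^2)$). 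Hence the Lyapunov ratio tends to zero, yielding that $\{\bar{U}_{\tau,\Gamma}-E(\bar{U}_{\tau,\Gamma}\mid\cF_I,\cZ_I)\}/\{\var(\bar{U}_{\tau,\Gamma}\mid\cF_I,\cZ_I)\}^{1/2}$ converges in distribution to the standard normal. An identical argument, using (\ref{eq:D2moment})--(\ref{eq:D4moment}) in place of (\ref{eq:U2moment})--(\ref{eq:U4moment}), handles $\bar{D}_{\tau,\Gamma}$.

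To conclude, set $\sigma_{\bar U}^2 = \var(\bar{U}_{\tau,\Gamma}\mid\cF_I,\cZ_I) = I^{-2}\sum_i \var(U_{i,\tau,\Gamma}\mid\cF_I,\cZ_I)$, and observe from (\ref{eq:kstar2}) that $k^*_{\tau,\Gamma}(\alpha)/\sigma_{\bar U} = \Phi^{-1}(1-\alpha)$. Lemma \ref{lemma:ub} gives $\P(\bar{D}_{\tau,\Gamma}\geq k^*_{\tau,\Gamma}(\alpha)\mid\cF_I,\cZ_I) \leq \P(\bar{U}_{\tau,\Gamma}\geq k^*_{\tau,\Gamma}(\alpha)\mid\cF_I,\cZ_I)$, and because $\tau = \bar{\Delta}$, Lemma \ref{lemma:expec} yields $E(\bar{U}_{\tau,\Gamma}\mid\cF_I,\cZ_I)\leq 0$. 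Standardizing,
$$\P\!\left(\bar{U}_{\tau,\Gamma}\geq k^*_{\tau,\Gamma}(\alpha)\,\big|\,\cF_I,\cZ_I\right)\;\leq\;\P\!\left(\frac{\bar{U}_{\tau,\Gamma}-E(\bar{U}_{\tau,\Gamma}\mid\cF_I,\cZ_I)}{\sigma_{\bar U}}\;\geq\;\Phi^{-1}(1-\alpha)\,\Big|\,\cF_I,\cZ_I\right),$$
whose right-hand side tends to $\alpha$ by the CLT just established, delivering (\ref{eq:kstar}). The principal technical obstacle is the Lyapunov verification, which requires carefully tracing through the moment bounds of Lemma \ref{lemma:momentbounds} alongside the sequence conditions (\ref{eq:CLT}) and (\ref{eq:cheby}); everything else is a clean alignment of stochastic dominance, the mean bound, and the deliberate calibration of $k^*_{\tau,\Gamma}(\alpha)$ against the standard deviation of the bounding random variable.
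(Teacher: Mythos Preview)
Your proof is correct and follows essentially the same route as the paper: a Lyapunov CLT with $\delta=2$ for the independent summands, using the moment bounds of Lemma \ref{lemma:momentbounds} together with (\ref{eq:CLT}) and (\ref{eq:cheby}), then stochastic dominance via Lemma \ref{lemma:ub} and the mean bound from Lemma \ref{lemma:expec} to finish. If anything, you are slightly more explicit than the paper in handling the distinction between $(\Delta_i-\tau)^4$ and $\Delta_i^4$ when invoking (\ref{eq:cheby}), and in spelling out the standardization step that delivers (\ref{eq:kstar}).
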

\begin{proof}
We prove asymptotic normality of $n^{1/2}\bar{U}_{\Gamma}$, and with it (\ref{eq:kstar}) by reference to Lemma \ref{lemma:ub}; the proof for $n^{1/2}\bar{D}_{\Gamma}$ is analogous. The $U_{i,\Gamma}$ are conditionally independent given $\cF$ and $\cT$. Further, by Lemma \ref{lemma:expec} and sharpness of $U_{i,\Gamma}$ as a stochastic upper bound we have that $E\left(n^{-1}\sum_{i=1}^nU_{i,\Gamma} \right) \leq 0$. To prove asymptotic normality of $n^{1/2}\bar{U}_\Gamma$, it suffices to show that Lyapunov's condition holds for $\delta=2$, i.e. that 
\begin{align*}
\sum_{i=1}^nE|U_{i,\Gamma}-E(U_{i,\Gamma})|^4/\left(\sum_{i=1}^n\var(U_{i,\Gamma})\right)^2 \rightarrow 0
\end{align*}
By (\ref{eq:U2moment}), $n^{-1}\sum_{i=1}^n\var(U_{i,\Gamma}) \geq 16\theta_{\Gamma}(1-\theta_{\Gamma})^3n^{-1}\sum_{i=1}^n\eta_i^2$, which is greater than $16\theta_{\Gamma}(1-\theta_{\Gamma})^3C$ for some $C > 0$ as $n\rightarrow\infty$ by (\ref{eq:CLT}). Applying Jensen's inequality and utilizing (\ref{eq:U4moment}) and (\ref{eq:cheby}), we have that $n^{-2}\sum_{i=1}^nE|U_{i,\Gamma}-E(U_{i,\Gamma})|^4\rightarrow 0$. Hence,
\begin{align*}
\sum_{i=1}^nE|U_{i,\Gamma}-E(U_{i,\Gamma})|^4/\left(\sum_{i=1}^n\var(U_{i,\Gamma})\right)^2 &= n^{-2}\sum_{i=1}^nE|U_{i,\Gamma}-E(U_{i,\Gamma})|^4/\left(n^{-1}\sum_{i=1}^n\var(U_{i,\Gamma})\right)^2\\ &\leq n^{-2}\sum_{i=1}^nE|U_{i,\Gamma}-E(U_{i,\Gamma})|^4/(16\theta_{\Gamma}(1-\theta_{\Gamma})^3C)^2 \rightarrow 0.
\end{align*}
This, along with Lemma \ref{lemma:ub}, proves the result.
\end{proof}

\begin{lemma}\label{lemma:prob}
Suppose that treatment assignment satisfies (\ref{eq:sensmodel}) at $\Gamma$ and $H_N$ holds. If  (\ref{eq:CLT}) and (\ref{eq:cheby}) hold, then for all $\epsilon > 0$, as $n\rightarrow \infty$

\begin{align}\label{eq:pmean}
\P\left(-\epsilon + \bar{D}_{\Gamma} \geq 0 \right) \rightarrow 0. \\
\label{eq:pvar}
\P\left\{\epsilon + n\text{se}(\bar{D}_\Gamma)^2 \leq n^{-1}\sum_{i=1}^n\var(U_{i,\Gamma})\right\} \rightarrow 0
\end{align}
\end{lemma}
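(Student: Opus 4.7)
Both parts are concentration statements about quantities formed from the conditionally independent random variables $D_{1,\tau,\Gamma},\dots,D_{I,\tau,\Gamma}$ (and their stochastic majorants $U_{i,\tau,\Gamma}$), and both will follow from Chebyshev's inequality after appropriate centering. The moment tools come from Lemmas \ref{lemma:expec}--\ref{lemma:momentbounds} and the supplied conditions (\ref{eq:CLT})--(\ref{eq:1moment}).

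For (\ref{eq:pmean}), the plan is first to invoke Lemma \ref{lemma:ub} with $k=\epsilon$, so that $\P(\bar{D}_{\tau,\Gamma} \geq \epsilon \mid \cF_I, \cZ_I) \leq \P(\bar{U}_{\tau,\Gamma} \geq \epsilon \mid \cF_I, \cZ_I)$. Since $\tau = \bar{\Delta}$, Lemma \ref{lemma:expec} gives $E(\bar{U}_{\tau,\Gamma}\mid \cF_I,\cZ_I)\leq 0$, so the right-hand side is at most $\P(|\bar{U}_{\tau,\Gamma} - E(\bar{U}_{\tau,\Gamma}\mid \cF_I,\cZ_I)|\geq \epsilon \mid \cF_I,\cZ_I)$. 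Chebyshev bounds this by $\epsilon^{-2}\var(\bar{U}_{\tau,\Gamma}\mid \cF_I,\cZ_I) = \epsilon^{-2}I^{-2}\sum_{i=1}^I \var(U_{i,\tau,\Gamma}\mid \cF_I,\cZ_I)$, and the upper bound in (\ref{eq:U2moment}) together with $I^{-2}\sum_i \eta_i^2 \to 0$ from (\ref{eq:cheby}) sends this to zero.

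For (\ref{eq:pvar}), the plan is to use Lemma \ref{lemma:vub} to replace $I^{-1}\sum_i \var(U_{i,\tau,\Gamma}\mid \cF_I,\cZ_I)$ by the larger $E(IS^2_{D_{\tau,\Gamma}}\mid \cF_I,\cZ_I)$, reducing the claim to $\P(|IS^2_{D_{\tau,\Gamma}} - E(IS^2_{D_{\tau,\Gamma}}\mid \cF_I,\cZ_I)|\geq \epsilon\mid \cF_I,\cZ_I) \to 0$. Via Chebyshev this in turn follows from $\var(IS^2_{D_{\tau,\Gamma}}\mid \cF_I,\cZ_I) \to 0$. I would decompose $IS^2_{D_{\tau,\Gamma}} = (I-1)^{-1}\sum_i D_{i,\tau,\Gamma}^2 - I(I-1)^{-1}\bar{D}_{\tau,\Gamma}^2$ and treat the two pieces separately: the first has conditional variance at most $(I-1)^{-2}\sum_i E(D_{i,\tau,\Gamma}^4\mid \cF_I,\cZ_I)$, which by (\ref{eq:D4moment}), the elementary inequality $(\Delta_i-\tau)^4 \leq 8\Delta_i^4 + 8\tau^4$, and (\ref{eq:cheby}) tends to zero; the second piece concentrates since $\var(\bar{D}_{\tau,\Gamma}\mid \cF_I,\cZ_I)\to 0$ from (\ref{eq:D2moment}) and (\ref{eq:cheby}), and $\bar{\mu}_I := I^{-1}\sum_i E(D_{i,\tau,\Gamma}\mid \cF_I,\cZ_I)$ stays bounded thanks to condition (\ref{eq:1moment}) with $\tau=\bar{\Delta}$, so that $\bar{D}_{\tau,\Gamma}^2 - \bar{\mu}_I^2 = (\bar{D}_{\tau,\Gamma}-\bar{\mu}_I)(\bar{D}_{\tau,\Gamma}+\bar{\mu}_I)\to 0$ in probability.

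The chief obstacle is the fourth-moment bookkeeping in (\ref{eq:pvar}): one must verify that the $(\Delta_i-\tau)^4$ terms in the bound (\ref{eq:D4moment}), rather than the $\Delta_i^4$ of condition (\ref{eq:cheby}), still contribute negligibly, and that $\bar{\mu}_I$ is indeed bounded as $I\to\infty$. Both reduce to elementary manipulations once (\ref{eq:1moment}) is in hand, and the remaining steps are a routine application of Chebyshev's inequality together with a union bound to combine the two pieces.
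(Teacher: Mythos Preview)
Your proposal is correct and essentially matches the paper's argument: both parts are handled via Chebyshev after centering, with the decomposition $IS^2_{D_{\tau,\Gamma}} = (I-1)^{-1}\sum_i D_{i,\tau,\Gamma}^2 - I(I-1)^{-1}\bar{D}_{\tau,\Gamma}^2$ and the moment bounds from Lemma~\ref{lemma:momentbounds}. The only cosmetic difference is that for (\ref{eq:pmean}) the paper applies Chebyshev directly to $\bar{D}_{\tau,\Gamma}$ using (\ref{eq:D2moment}) rather than first passing to $\bar{U}_{\tau,\Gamma}$ via Lemma~\ref{lemma:ub}; your extra care with $(\Delta_i-\tau)^4$ versus $\Delta_i^4$ fills in a step the paper leaves implicit.
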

\begin{proof}
We begin by proving (\ref{eq:pmean}). By Lemma \ref{lemma:expec},
$\P\left(-\epsilon + \bar{D}_\Gamma \geq 0\right) \leq \P\left(-\epsilon + \bar{D}_\Gamma-E(\bar{D}_\Gamma) \geq 0\right)$. The variance of $\var(D_{i,\Gamma})$ is, by (\ref{eq:D2moment}), less than $4\theta_{\Gamma}^2\eta_i^2$. Therefore, using (\ref{eq:cheby}),
\begin{align*}
\var\left(\bar{D}_\Gamma\right) & \leq 4\theta_{\Gamma}^2n^{-2}\sum_{i=1}^n\eta_i^2\rightarrow 0
\end{align*} as $n\rightarrow\infty$. Chebyshev's inequality then yields (\ref{eq:pmean}).  

We now prove (\ref{eq:pvar}). Recall that $n\text{se}(\bar{D}_\Gamma)^2 = (n-1)^{-1}\sum_{i=1}^nD_{i,\Gamma}^2 - n/(n-1)(\bar{D}_\Gamma)^2$. By Lemma \ref{lemma:vub}, 
\begin{align*} &\P\left\{\epsilon + n\text{se}(\bar{D}_\Gamma)^2 \leq n^{-1}\sum_{i=1}^n\var(U_{i})\right\} \\&\leq \P\left\{\epsilon + n\text{se}(\bar{D}_\Gamma)^2 \leq (n-1)^{-1}\sum_{i=1}^n\var(D_{i,\Gamma}) + (n-1)^{-1}\sum_{i=1}^n(E(D_{i,\Gamma}) - E(\bar{D}_\Gamma))^2\right\}\\®
&= \P\left\{\epsilon + (n-1)^{-1}\left(\sum_{i=1}^nD_{i,\Gamma}^2-\sum_{i=1}^nE(D_{i,\Gamma}^2)\right) - n(n-1)^{-1}(\bar{D}_\Gamma^2 - E(\bar{D}_\Gamma)^2)\leq 0\right\}\\
\end{align*} 

The proof of (\ref{eq:pmean}) along with (\ref{eq:1moment}) yields that $\bar{D}_\Gamma^2 - E(\bar{D}_\Gamma)^2$ converges in probability to 0. We now show that $(n-1)^{-1}\left\{\sum_{i=1}^nD_{i,\Gamma}^2- \sum_{i=1}^nE(D_{i,\Gamma}^2)\right\}$ also converges in probability to 0. Using (\ref{eq:D4moment}),

\begin{align*}
\var \left\{(n-1)^{-1}\sum_{i=1}^nD_{i,\Gamma}^2 \right\} &\leq (n-1)^{-2} \sum_{i=1}^nE(D_{i,\Gamma}^4 )\\
&\leq 128\theta_{\Gamma}^4(n-1)^{-2}\sum_{i=1}^n(\bar{\tau}_i^4 + \eta_i^4),
\end{align*} which converges to 0 as $n\rightarrow \infty$ through (\ref{eq:cheby}). Applying Chebyshev's inequality yields the desired convergence in probability, which in turn yields (\ref{eq:pvar}).
\end{proof}

\subsubsection*{Proof of Theorem \ref{prop:neyman}}
Define $k_{\Gamma}(\alpha) = \text{se}(\bar{D}_\Gamma)\Phi^{-1}(1-\alpha)$ with $0 < \alpha \leq 0.5$. By (\ref{eq:pvar}), taking $\epsilon \downarrow 0$,
\begin{align*}
\underset{n\rightarrow \infty}{\lim}\P\{k_{\Gamma}(\alpha) \geq  k^*_{\Gamma}(\alpha)\} = 1.
\end{align*}
This, in combination with (\ref{eq:kstar}), yields the conclusion of the theorem.
\subsection{Theorem \ref{prop:student}}
\begin{lemma}\label{lemma:bivariate}
Take a vector $\bV_\Gamma$ distributed as in \S \ref{sec:add} with $V_{i,\Gamma} = \pm 1$ and $\P(V_{i,\Gamma}  = 1) = \theta_\Gamma$. Let $\bV'_{\Gamma}$ be an $iid$ copy of $\bV_\Gamma$. Then, under (\ref{eq:cheby}) and (\ref{eq:2moment}), $n^{1/2}\bar{B}_{\Gamma}(\bV_\Gamma, \bhtau)$ and $n^{1/2}\bar{B}_{\Gamma}(\bV'_{\Gamma}, \bhtau)$ are $iid$ and converge jointly to a bivariate normal, each with mean zero and variance $ 4\theta_\Gamma(1-\theta_\Gamma)\nu^2$.
\begin{proof}
Recall that $B_{i,\Gamma} = V_{i,\Gamma}|\hat{\tau}_i| - (2\theta_\Gamma-1)|\hat{\tau}_i|$ and that $E(B_{i,\Gamma} ) = 0$. Since uncorrelatedness implies independence for the normal, to show independence of the limiting distributions for $\bar{B}_\Gamma$ and $\bar{B}'_\Gamma$  it suffices to show that $\text{cov}(\bar{B}_\Gamma, \bar{B}'_\Gamma ) = 0$.
\begin{align*}
\text{cov}(\bar{B}_\Gamma, \bar{B}'_\Gamma) &= E\left\{\text{cov}(\bar{B}_\Gamma, \bar{B}'_\Gamma \mid \bhtau) \right\}\\& + \text{cov}\left\{E(\bar{B}_\Gamma \mid  \bhtau ),E(\bar{B}'_\Gamma \mid  \bhtau) \right\}\\
&= n^{-2}E\left\{\sum_{i=1}^n\hat{\tau}_i^2\text{cov}(V_{i,\Gamma}, V'_{i,\Gamma}) \right\} + 0\\
&= 0
\end{align*}
By the Cram\'er-Wold device, to show bivariate asymptotic normality it suffices to show that $n^{1/2}(w_1\bar{B}_\Gamma + w_2\bar{B}'_\Gamma)$ converge to a normal with mean zero and variance $(w_1^2+w_2^2)4\theta_\Gamma(1-\theta_\Gamma)\nu^2$ for any vector of constants $(w_1, w_2)$. Fixing $(w_1, w_2)$, we now show this to be the case through Lyapunov's condition. We have $E(w_1B_{i,\Gamma} + w_2B_{i,\Gamma}) = 0$, that $E(B_{i,\Gamma}^4) = \pi_i(\bar{\tau}_i + \eta_i)^4 + (1-\pi_i)(\bar{\tau}_i - \eta_i)^4 \leq 8\bar{\tau}_i^4 + 8\eta_i^4$, and that $E((w_1B_{i,\Gamma} + w_2A'_{i,\Gamma})^4) \leq 8(w_1^4 + w_2^4)E(B_{i,\Gamma}^4)$.  Combining this with (\ref{eq:cheby}), we have that $n^{-2}\sum_{i=1}^nE((w_1B_{i,\Gamma} + w_2B_{i,\Gamma}')^4) \rightarrow 0$. By (\ref{eq:2moment}), we have that $n^{-1}\sum_{i=1}^n\var(w_1B_{i,\Gamma} + w_2B_{i,\Gamma}' ) \rightarrow (w_1^2+w_2^2)4\theta_\Gamma(1-\theta_\Gamma)\nu^2 > 0$. Hence,
\begin{align*}
&\sum_{i=1}^nE((w_1B_{i,\Gamma} + w_2B_{i,\Gamma}')^4 )/\left(\sum_{i=1}^n\var(w_1B_{i,\Gamma} + w_2B_{i,\Gamma}')\right)^2\\ &= n^{-2}\sum_{i=1}^nE((w_1B_{i,\Gamma} + w_2B_{i,\Gamma}')^4)/\left(n^{-1}(w_1^2+w_2^2)\sum_{i=1}^n\var(B_{i,\Gamma})\right)^2\rightarrow 0. 
\end{align*}
Lyapunov's condition is satisfied at $\delta=2$, proving the result.
\end{proof}
\end{lemma}

\begin{lemma}\label{lemma:F}
Under the assumptions of Theorem \ref{prop:student}, for any point $a$
\begin{align*}
\hat{F}_{ \Gamma}(a/n^{1/2}) \overset{p}{\rightarrow} \Phi\left(a/\nu_\Gamma\right),
\end{align*}
where $\nu^2_\Gamma = 4\theta_\Gamma(1-\theta_\Gamma)\nu^2$
\begin{proof}
Observe that 
\begin{align*} E(\hat{F}_{ \Gamma}(a/n^{1/2}) ) &= E(E(\1\{n^{1/2}\bar{B}_{\Gamma}(\bV_\Gamma, \bhtau)\leq a\} \mid  )  )\\
&= E(\1\{n^{1/2}\bar{B}_{\Gamma}(\bV_\Gamma, \bhtau)\leq a\} )\\
&= \P(n^{1/2}\bar{B}_{\Gamma}(\bV_\Gamma, \bhtau)\leq a) \end{align*}By Lemma \ref{lemma:bivariate}, $n^{1/2}\bar{B}_{\Gamma}(\bV_\Gamma, \bhtau)$ converges in distribution to a normal with mean 0 and variance $4\theta_\Gamma(1-\theta_\Gamma)\nu^2$. Hence, $E\{\hat{F}_{\Gamma}(a/n^{1/2})\} \rightarrow \Phi(a/\nu_\Gamma)$. Through Chebyshev's inequality, to illustrate the desired convergence in probability it suffices to show that $E\{\hat{F}_{\Gamma}^2(a/n^{1/2})\}\rightarrow \Phi^2(a/\nu_{\Gamma})$, which is equivalent to $\var\{\hat{F}_{\Gamma}(a/n^{1/2}) \} \rightarrow 0$.

\begin{align*}&E(\hat{F}^2_{\Gamma}(a/n^{1/2}))\\&= E\left[\sum_{\bt, \bt' \in \Omega}\1\{n^{1/2}\bar{B}_{\Gamma}(\bt_1-\bt_2,\bhtau) \leq a\}\1\{n^{1/2}\bar{B}_{\Gamma}(\bt_1'-\bt_2', \bhtau)\leq a\}\prod_{i=1}^n\theta_{\Gamma}^{t_{i1}+t_{i1}'}(1-\theta_{\Gamma})^{2-t_{i1}-t_{i1}'} \right]\\
&= \P(n^{1/2}\bar{B}_{\Gamma}(\bV_\Gamma, \bhtau) \leq a, n^{1/2}\bar{B}_{\Gamma}(\bV'_\Gamma, \bhtau) \leq a) \rightarrow \Phi^2(a/\nu_\Gamma)
\end{align*}as desired, where the last line uses Lemma \ref{lemma:bivariate}.
\end{proof}
\end{lemma}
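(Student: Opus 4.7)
The plan is to exploit that $\hat{F}_{\tau,\Gamma}(a/I^{1/2})$ is bounded in $[0,1]$, so convergence in probability follows from showing its conditional mean and variance (given $\cF_I, \cZ_I$) tend to $\Phi(a/\nu_\Gamma)$ and $0$ respectively. For the mean, the tower property gives
\begin{align*}
E\{\hat{F}_{\tau,\Gamma}(a/I^{1/2}) \mid \cF_I, \cZ_I\} = \P\{I^{1/2}\bar{A}_{\tau,\Gamma}(\tilde{V}_\Gamma, Y) \le a \mid \cF_I, \cZ_I\},
\end{align*}
because averaging the indicator in the definition of $\hat{F}_{\tau,\Gamma}$ over $\bz \in \Omega_I$ with weights $\prod_i \theta_\Gamma^{z_{i1}}(1-\theta_\Gamma)^{1-z_{i1}}$ is precisely an expectation over an independent worst-case vector $\tilde{V}_\Gamma$, and the outer expectation absorbs $\bY$. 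Lemma \ref{lemma:bivariate} gives that $I^{1/2}\bar{A}_{\tau,\Gamma}(\tilde{V}_\Gamma, Y)$ is asymptotically $N(0, \nu_\Gamma^2)$, so by the portmanteau theorem at the continuity point $a$ the mean converges to $\Phi(a/\nu_\Gamma)$.

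For the variance, I would expand $\hat{F}^2_{\tau,\Gamma}(a/I^{1/2})$ as a double sum over pairs $(\bz, \bz') \in \Omega_I^2$ whose joint weights factor as the product of two independent copies of the worst-case distribution. This yields
\begin{align*}
E\{\hat{F}^2_{\tau,\Gamma}(a/I^{1/2}) \mid \cF_I, \cZ_I\} = \P\{I^{1/2}\bar{A}_{\tau,\Gamma}(\tilde{V}_\Gamma, Y) \le a,\ I^{1/2}\bar{A}_{\tau,\Gamma}(\tilde{V}'_\Gamma, Y) \le a \mid \cF_I, \cZ_I\}.
\end{align*}
By Lemma \ref{lemma:bivariate}, the pair $(I^{1/2}\bar{A}_{\tau,\Gamma}(\tilde{V}_\Gamma, Y),\, I^{1/2}\bar{A}_{\tau,\Gamma}(\tilde{V}'_\Gamma, Y))$ converges jointly to an uncorrelated bivariate normal, hence to independent $N(0, \nu_\Gamma^2)$ marginals, and the joint probability converges to $\Phi^2(a/\nu_\Gamma)$. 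Combining these two limits gives $\var\{\hat{F}_{\tau,\Gamma}(a/I^{1/2}) \mid \cF_I, \cZ_I\} \to 0$, and Chebyshev's inequality finishes the argument.

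The potentially delicate point is that $\tilde{V}_\Gamma$ and $\tilde{V}'_\Gamma$ act on a \emph{common} vector $\bY$, so one might a priori worry that shared randomness in $\bY$ spoils asymptotic independence of the two averages. Lemma \ref{lemma:bivariate}, via the Cram\'er-Wold device combined with a conditional-covariance decomposition exploiting $\mathrm{cov}(\tilde{V}_{i,\Gamma}, \tilde{V}'_{i,\Gamma}) = 0$, already resolves this issue, so the remaining work reduces to the routine Hoeffding-style second-moment calculation for empirical randomization measures sketched above.
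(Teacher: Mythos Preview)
Your proposal is correct and follows essentially the same approach as the paper's proof: compute the conditional mean via the tower property and Lemma~\ref{lemma:bivariate}, expand the square as a double sum to get a joint probability over two independent worst-case sign vectors, and apply Lemma~\ref{lemma:bivariate} again to obtain $\Phi^2(a/\nu_\Gamma)$, whence the variance vanishes and Chebyshev finishes. Your added remark about the shared $\bY$ is a helpful clarification but not a departure from the paper's argument.
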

\begin{lemma}\label{lemma:SA}
\begin{align*}
E\{\text{se}(\bar{B}_\Gamma)^2\} &= \var(\bar{B}_{ \Gamma} )
\end{align*}
\begin{proof} The lemma follows by Remark \ref{remark:var} along with the fact that by construction $B_{i,\Gamma}$ is centered, such that $E(\bar{B}_{i,\Gamma}) = 0$.
\end{proof}
\end{lemma}
\begin{lemma}\label{lemma:cprob}Under the assumptions of Theorem \ref{prop:student}, 
\begin{align*}
n\text{se}(\bar{B}_\Gamma)^2& \overset{p}{\rightarrow} 4\theta_\Gamma(1-\theta_\Gamma)\nu^2
\end{align*}
\begin{proof}
 Decompose $n\text{se}(\bar{B}_\Gamma)^2 = (n-1)^{-1}\sum_{i=1}^nB_{i,\Gamma}^2 - n/(n-1)\bar{B}_\Gamma$.  $E(\bar{B}_\Gamma)$ is 0, while by Lemma \ref{lemma:bivariate} $\bar{B}_\Gamma$ has limiting variance $4\theta_\Gamma(1-\theta_\Gamma)\nu^2/n \rightarrow 0$. Hence, $n/(n-1)\bar{B}_\Gamma$ converges in probability to 0 by Chebyshev's inequality. Meanwhile, $E((n-1)^{-1}\sum_{i=1}^nB_{i,\Gamma}^2 ) \rightarrow 4\theta_\Gamma(1-\theta_\Gamma)\nu^2$ by Lemma \ref{lemma:SA} and (\ref{eq:2moment}). To show that the variance of this term goes to zero, observe that $B_{i,\Gamma}^4 \leq 8(1+(2\theta_\Gamma-1)^4)\hat{\tau}_i^4$. Similar arguments to those of Lemma \ref{lemma:prob}, utilizing (\ref{eq:cheby}), then yield that the variance goes to zero, thus yielding the result through Chebyshev's inequality.
\end{proof}
\end{lemma}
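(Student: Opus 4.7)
The plan is to decompose $IS^2_{A_{\tau,\Gamma}}$ into its two natural pieces and show that each converges in probability separately. Using the identity for the sample variance,
\begin{align*}
IS^2_{A_{\tau,\Gamma}} &= \frac{1}{I-1}\sum_{i=1}^I A_{i,\tau,\Gamma}^2 \;-\; \frac{I}{I-1}\,\bar{A}_{\tau,\Gamma}^2,
\end{align*}
so it suffices to argue that the quadratic term vanishes and that the sum-of-squares term converges to $\nu^2_\Gamma$.

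For the quadratic term, $E(\bar{A}_{\tau,\Gamma}\mid \cF_I,\cZ_I)=0$ by (\ref{eq:exA}), and Lemma \ref{lemma:bivariate} gives $\var(I^{1/2}\bar{A}_{\tau,\Gamma}\mid\cF_I,\cZ_I)\to\nu^2_\Gamma$, hence $\var(\bar{A}_{\tau,\Gamma}\mid \cF_I,\cZ_I)=O(1/I)\to 0$. Chebyshev's inequality then yields $\bar{A}_{\tau,\Gamma}\overset{p}{\to}0$, and since squaring preserves convergence in probability, $(I/(I-1))\bar{A}_{\tau,\Gamma}^2\overset{p}{\to}0$.

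For the sum-of-squares term, I would first compute the conditional mean: by Remark \ref{remark:var} applied to the independent random variables $A_{i,\tau,\Gamma}$ together with $E(A_{i,\tau,\Gamma}\mid\cF_I,\cZ_I)=0$, Lemma \ref{lemma:SA} gives $E(S^2_{A_{\tau,\Gamma}}\mid\cF_I,\cZ_I)=\var(\bar{A}_{\tau,\Gamma}\mid\cF_I,\cZ_I)$, so $E((I-1)^{-1}\sum_i A_{i,\tau,\Gamma}^2\mid\cF_I,\cZ_I)$ agrees with $I\cdot\var(\bar{A}_{\tau,\Gamma}\mid \cF_I,\cZ_I)$ up to a factor of $I/(I-1)$, which converges to $\nu^2_\Gamma$ by (\ref{eq:2moment}) and the explicit variance formula (\ref{eq:varA}).

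The main obstacle, and the step most worth spelling out, is controlling the variance of $(I-1)^{-1}\sum_i A_{i,\tau,\Gamma}^2$. Since the $A_{i,\tau,\Gamma}$ are conditionally independent given $\cF_I,\cZ_I$, so are their squares, so the variance factorizes and is bounded by $(I-1)^{-2}\sum_i E(A_{i,\tau,\Gamma}^4\mid\cF_I,\cZ_I)$. Here I would use the pointwise bound $A_{i,\tau,\Gamma}^4 \leq 8\bigl(1+(2\theta_\Gamma-1)^4\bigr)|Y_i-\tau|^4$ coming from the definition of $A_{i,\tau,\Gamma}$ and the inequality $(a+b)^4\leq 8(a^4+b^4)$. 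Taking conditional expectations, $E|Y_i-\tau|^4$ is at most a constant multiple of $(\Delta_i-\tau)^4+\eta_i^4$, and the fourth-moment Cesàro conditions in (\ref{eq:cheby}) imply that $(I-1)^{-2}\sum_i E(A_{i,\tau,\Gamma}^4\mid\cF_I,\cZ_I)\to 0$. A final application of Chebyshev's inequality then gives $(I-1)^{-1}\sum_i A_{i,\tau,\Gamma}^2\overset{p}{\to}\nu^2_\Gamma$, and combining this with the vanishing quadratic term completes the proof.
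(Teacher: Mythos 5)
Your proposal is correct and follows essentially the same route as the paper's own proof: the identical decomposition of $IS^2_{A_{\tau,\Gamma}}$, the same Chebyshev argument for the vanishing quadratic term via (\ref{eq:exA}) and Lemma \ref{lemma:bivariate}, the same identification of the mean of the sum-of-squares term via Lemma \ref{lemma:SA} and (\ref{eq:2moment}), and the same fourth-moment bound $A_{i,\tau,\Gamma}^4 \leq 8\bigl(1+(2\theta_\Gamma-1)^4\bigr)|Y_i-\tau|^4$ combined with (\ref{eq:cheby}) to kill the variance. You have merely spelled out a few steps the paper leaves implicit (and corrected its typographical omission of the square on $\bar{A}_{\tau,\Gamma}$ in the decomposition).
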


\subsubsection*{Proof of Theorem \ref{prop:student}}
By Lemma \ref{lemma:F}, $\hat{F}_{\Gamma}(a/n^{1/2})$, the biased randomization distribution of $n^{1/2}\bar{B}_{\Gamma}$, converges in probability to $\Phi(a/\nu_\Gamma)$ for all points $a$, where again $\nu^2_\Gamma = 4\theta_\Gamma(1-\theta_\Gamma)\nu^2$. By Lemma \ref{lemma:cprob} and the continuous mapping theorem $n^{1/2} \text{se}(\bar{B}_\Gamma)$ converges in probability to $\nu_\Gamma$. Recall that $\hat{G}_{\Gamma}(t)$ is the biased randomization distribution of the studentized statistic $n^{1/2}\bar{B}_{\Gamma}/\{n^{1/2}\text{se}(\bar{B}_\Gamma)\}$. Setting $a=tn^{1/2}\text{se}(\bar{B}_\Gamma)$ and using Slutsky's theorem for randomization distributions \citep[][Lemma 5.2]{chu13}, we have that $\hat{G}_{\Gamma}(t)$ then converges in probability to $\Phi(t\nu_\Gamma/\nu_\Gamma) = \Phi(t)$ for all points $t$ as desired.

\subsection{Proof of Theorem 3}
We first prove exactness of $\varphi_{S+}(\alpha,\Gamma)$ under $H_F$. Re-arrange the pairs such that the first individual in each pair has the larger response. Define $q_{i1} = 2(1-\theta_\Gamma)|\eta_i|$ and $q_{i2} = -2\theta_\Gamma|\eta_i|$, and recall that $\theta_\Gamma \geq 1-\theta_\Gamma$. For any treatment assignment $\bt$, the positive part statistic can be expressed as 
\begin{align}\label{eq:teststat}
f(\bt, \bq) &= \max\left\{0,\frac{n^{-1}\sum_{i=1}^n\sum_{j=1}^2q_{ij}t_{ij}}{\sqrt{\frac{1}{n(n-1)}\sum_{i=1}^n\sum_{j=1}^2t_{ij}\left(q_{ij} - n^{-1}\sum_{i=1}^n\sum_{j=1}^2q_{ij}t_{ij}\right)^2}}\right\}
\end{align}
Let the vector $\bq[i12]$ equal the vector obtained from $\bq$ by exchanging the first and second elements in pair $i$ while leaving the other elements fixed, such that $(q[i12])_{i2} = q_{i1}$. A function $h(\bt,\bq)$ is called an arrangement increasing function in pairs if for all pairs $i$ $h(\bt,\bq) \geq h(\bt,\bq[iå12])$ whenever $(t_{i1}-t_{i2})(q_{i1}-q_{i2})\geq 0$ \citep[][\S 2.4.4]{obs}. In words, this says the function $h$ takes on a larger value when the elements $\bt$ and $\bq$ are arranged in the same order within a pair than it does when they are out of order. 

We now show that $f(\bt, \bq)$ in (\ref{eq:teststat}) is arrangement increasing in pairs. We do so for the $nth$ pair without loss of generality. For each $\bt \in \Omega$, let $d_i = t_{i1}q_{i1} + t_{i2}q_{i2}$. Consider fixed values for $d_1,...,d_{n-1}$ and consider the two possibilites for $d_n$, either $d_n=2(1-\theta_\Gamma)|\eta_n|$ or $d_n = -2\theta_\Gamma|\eta_n|$. It suffices to show that the function $f(\bt, \bq)$ is at least as large when $d_n = 2(1-\theta_\Gamma)|\eta_n|$ as it is when $d_{n} = -2\theta_\Gamma|\eta_n|$ for any fixed values of $d_1,...,d_{n-1}$. 

If $\sum_{i=1}^{n-1}d_i \leq 0$, then this is trivially true, as the test statistic will either be positive when $d_n=2(1-\theta_\Gamma)|\eta_n|$ and zero otherwise, or will be zero in both cases due to the positive part modification. We thus restrict attention the case $\sum_{i=1}^{n-1}d_i \geq 0$. As the numerator of $f(\bt, \bq)$ would be larger when $d_n = 2(1-\theta_\Gamma)|\eta_n|$, it is enough to show that the denominator will be smaller when $d_n = 2(1-\theta_\Gamma)|\eta_n|$ than it would be if $d_{n} = -2\theta_\Gamma|\eta_n|$ when $\sum_{i=1}^{n-1}d_i\geq 0$. Algebra yields that this is true if and only if $4(n-1)/n(1-\theta_\Gamma)^2\eta_n^2 - 4/n(1-\theta)|\eta_i|\sum_{i=1}^{n-1}d_i \leq 4(n-1)/n(\theta_\Gamma)^2\eta_n^2 + 4/n(1-\theta)|\eta_i|\sum_{i=1}^{n-1}d_i$, which holds as $\sum_{i=1}^{n-1}d_i \geq 0$ and $\theta_\Gamma \geq (1-\theta_\Gamma)$. 

The function $f(\bt, \bq)$ is thus arrangement increasing over randomizations in $\Omega$, and the proof of exactness of $\varphi_{S+}$ under Fisher's sharp null follows by applying Theorem 2 of \citet{ros87}. The asymptotic correctness of $\varphi_{S+}$ under Neyman's weak null follows in a straightforward way from Theorem 2, and the proof is omitted.

\subsection{Proof of Theorem \ref{prop:epsilon}}
Note that $\theta_{\Gamma+\epsilon} > \theta_\Gamma$ for any $\epsilon>0$. Consider $D_{i, \Gamma+\epsilon} = \hat{\tau}_i - (2\theta_{\Gamma+\epsilon}-1)|\hat{\tau}_i|$. Since (\ref{eq:sensmodel}) holds at $\Gamma$ under the Theorem's conditions, by arguments parallel to those in Lemma \ref{lemma:ub} $\bar{D}_{\Gamma+\epsilon}$ is stochastically bounded by the random variable $\bar{W}_{\Gamma,\epsilon}$, where 
\begin{align*} W_{i, \Gamma, \epsilon} &= \bar{\tau}_i + V_{i,\Gamma}|\eta_i| - (2\theta_{\Gamma+\epsilon}-1)\{(1 + V_{i,\Gamma})|\bar{\tau}_i + |\eta_i|| + (1-V_{i, \Gamma})|\bar{\tau}_i - |\eta_i||\}/2.\end{align*}
Hence, $E(\bar{W}_{\Gamma,\epsilon} )\geq E(\bar{D}_{\Gamma + \epsilon})$. Further define $U_{i, \Gamma + \epsilon}$ as before, namely
\begin{align*}
U_{i, {\Gamma} + \epsilon} &= \bar{\tau}_i + V_{i,\Gamma+\epsilon}|\eta_i| - (2\theta_{\Gamma+\epsilon}-1)\{(1 + V_{i,\Gamma + \epsilon})|\bar{\tau}_i + |\eta_i|| + (1-V_{i, \Gamma+\epsilon})|\bar{\tau}_i - |\eta_i||\}/2.\end{align*}
We now show that even the limit, $E(\bar{U}_{\Gamma + \epsilon}) > E(\bar{W}_{\Gamma,\epsilon})$.
\begin{align*} &E(\bar{U}_{\Gamma+\epsilon} ) - E(\bar{W}_{\Gamma,\epsilon} )\\ &= n^{-1}\sum_{i=1}^n\left[(\theta_{\Gamma+\epsilon} - \theta_{\Gamma})\left\{|\eta_i| - (2\theta_{\Gamma+\epsilon}-1)|\bar{\tau}_i + |\eta_i||\right\}\right.\\&\left.+ (\theta_{\Gamma} - \theta_{\Gamma + \epsilon})\left\{-|\eta_i| + (2\theta_{\Gamma+\epsilon}-1)|\bar{\tau}_i - |\eta_i||\right\}\right]\\
&= (\theta_{\Gamma+\epsilon} - \theta_{\Gamma})n^{-1}\sum_{i=1}^n\left\{2|\eta_i| + (2\theta_{\Gamma+\epsilon}-1)\left(|\bar{\tau}_i - |\eta_i|| - |\bar{\tau}_i + |\eta_i||\right)\right\}\\
&\geq 4(1-\theta_{\Gamma+\epsilon})(\theta_{\Gamma+\epsilon} - \theta_{\Gamma})n^{-1}\sum_{i=1}^n|\eta_i|,\;\;
\end{align*}
where the last line follows arguments similar to those used to prove (\ref{eq:U2moment}). In the limit, the last line is greater than or equal to $4(1-\theta_{\Gamma+\epsilon})(\theta_{\Gamma+\epsilon} - \theta_{\Gamma})C > 0$ by (\ref{eq:CLT}). Hence, if (\ref{eq:sensmodel}) holds at $\Gamma$ but a sensitivity analysis is conducted at $\Gamma+\epsilon$, $E(\bar{D}_{\Gamma+\epsilon})$ is strictly less than $E(\bar{U}_{\Gamma+\epsilon} )$ asymptotically, which is itself less than or equal to zero if the average treatment effect equals $\tau$ by Lemma \ref{lemma:expec}. 

Let $\mu_D = E(\bar{D}_{\Gamma+\epsilon} ) < 0$,  $\sigma^2_{D}/n = \var(\bar{D}_{\Gamma+\epsilon} )$, and $\nu^2_\Gamma = 4\theta_\Gamma(1-\theta_\Gamma)\nu^2$. Asymptotically, the unstudentized procedure rejects if $n^{1/2}\bar{D}_{\Gamma+\epsilon} \geq \nu_{\Gamma+\epsilon}\Phi^{-1}(1-\alpha)$ by Theorem \ref{prop:student}.
\begin{align*}
&\underset{n\rightarrow\infty}{\lim}E\{\varphi_{F}(\alpha,\Gamma+\epsilon) \mid H_N\}\\
&=\underset{n\rightarrow\infty}{\lim}\P\{n^{1/2}\bar{D}_{\Gamma+\epsilon} \geq \nu_{\Gamma+\epsilon}\Phi^{-1}(1-\alpha)\mid H_N\}\\ &= \underset{n\rightarrow\infty}{\lim}\P\{n^{1/2}(\bar{D}_{\Gamma+\epsilon} - \mu_D)/\sigma_{D} \geq (\Phi^{-1}(1-\alpha)\nu_{\Gamma+\epsilon} -n^{1/2}\mu_D)/\sigma_{D}\mid H_N\}\\
&= \underset{n\rightarrow\infty}{\lim}1 - \Phi\left\{\frac{-n^{1/2}\mu_D + \Phi^{-1}(1-\alpha)\nu_{\Gamma+\epsilon}}{\sigma_{D}}\right\} = 0, 
\end{align*}
where the last line stems from $\mu_D  < 0$ and asymptotic normality of $n^{1/2}\bar{D}_{\Gamma}$ by  Lemma \ref{lemma:kstar}.
\bibliographystyle{apalike}
\bibliography{../bibliography.bib}

\end{document}